\DeclarePairedDelimiter\ceil{\lceil}{\rceil}
\newtheorem{theorem}{Theorem}
\newtheorem{problem}{Problem}
\newcommand{\argmax}{\operatornamewithlimits{argmax}}
\journal{Ad Hoc Networks}
\begin{document}

\begin{frontmatter}

\title{Maximum Lifetime Convergecast Tree in Wireless Sensor Networks}

\author[label1]{Jobish John\corref{cor1}}
\ead{jobish.john@ee.iitb.ac.in}

\author[label1]{Gaurav S. Kasbekar}
\ead{gskasbekar@ee.iitb.ac.in}

\author[label1]{Maryam Shojaei Baghini}
\ead{mshojaei@ee.iitb.ac.in}

\address[label1]{Department of Electrical Enginerring, Indian Institute of Technology Bombay, India}

\cortext[cor1]{ Corresponding author}

\begin{abstract}
We study the problem of building a maximum lifetime data collection tree for periodic convergecast applications in wireless 
sensor networks. We experimentally observe that if two nodes transmit the same number of data packets, the amount of energy consumption of the nodes
is approximately the same even if the payload lengths of the transmitted packets are different. This is because the major energy consumption during a packet transmission arises from radio start-up and medium access control overhead.
Our formulated lifetime maximization problem captures the energy expenditure due to message transmissions/ receptions in terms of the number 
of data \emph{packets} transmitted/ received, in contrast to prior works, which consider the number of data \emph{units} (amount of sensor data generated by a node) transmitted/ received. 
Variable transmission power levels of the radio and accounting for the sensor energy consumption are other factors that make
our problem formulation different from those in prior work.
We prove that this problem is NP-complete by reducing the set cover problem to it and propose an algorithm to solve it. The performance of the proposed algorithm is experimentally evaluated using Jain's fairness index as a metric by implementing it on an actual  testbed consisting of 20 sensor nodes and compared with those of the widely used shortest path tree and random data collection tree algorithms. The energy consumption of different nodes under the proposed algorithm are shown to be more balanced than under the shortest path tree and random data collection tree algorithms. Also, the performance of the proposed algorithm in large networks is studied through simulations and is compared with those of the state-of-the-art RaSMaLai algorithm, the shortest path tree, minimum spanning tree, and random tree based data collection schemes. Our simulations show that the proposed algorithm provides a significantly higher network lifetime compared to all the other considered data collection approaches.
\end{abstract}

\begin{keyword}

Convergecast \sep lifetime maximization \sep piggyback aggregation \sep Jain's fairness index

\end{keyword}

\end{frontmatter}

\section{Introduction}
\label{}

Wireless  sensor networks (WSN) are being extensively deployed for numerous monitoring applications such as environmental 
monitoring~\cite{lombardo2017wireless,yang2010design,barrenetxea2008sensorscope}, structural monitoring~\cite{pakzad2008design} and
agricultural monitoring~\cite{sharma2019maximization,langendoen2006murphy,ojha2015wireless,jobishjohnEAI}.
In a WSN, a large number of small, low-cost, resource-constrained devices called ``sensor nodes'' collectively sense or monitor an area of interest.
In most WSN monitoring applications, each sensor node reports its collected data to a decision center (also known as base station) in a multi-hop fashion, often via wireless transmissions along the edges of a tree.
This process of collecting data from all the sensors at the base station is known as \emph{convergecast}~\cite{fastcollection}. 
Sensor nodes are powered using small batteries 
and recharging of these batteries is difficult since often, sensor nodes are deployed in harsh, 
inaccessible areas and they are expected to operate for a long time with minimal human intervention. So minimization of the energy
consumed by sensor nodes is an important objective.

Network lifetime of a WSN can be defined in different ways~\cite{mak2009long}.
The time until the first node dies due to battery depletion and the time until a certain fraction of nodes get depleted are 
two commonly used notions for the network lifetime~\cite{mak2009long}.
In our work, by ``lifetime'' of the network, we mean the time until the first node in the network fails due to battery depletion.
The wireless communication strategies in a WSN need to be selected in such a way that the lifetime of the WSN is maximized.
Different approaches are used for energy conservation of the nodes. Efficient duty cycling~\cite{anastasi2009extending}, 
data aggregation/ compression schemes~\cite{xu2010delay,min2012data}, load balancing among the nodes~\cite{toumpis2009load,junbin2010efficient,imon2015energy}, 
clustering techniques~\cite{yarinezhad2020increasing,younis2006node} and energy efficient routing protocols~\cite{pantazis2012energy} are some of them.
Also, the data collection or convergecast operation in sensor networks needs to be carried out in an energy efficient manner in order to maximize
the lifetime of the network.

Various approaches have been proposed in the literature for performing the convergecast operation so as
to maximize the lifetime of the sensor network. These studies can be broadly classified into tree-based approaches, 
integer programming approaches, time allocation approaches and flow based schemes~\cite{lin2017approximation}. 
In tree-based data collection approaches~\cite{buragohain2005power, wu2010constructing,huang2012load}, a data collection tree constructed in an energy efficient manner
is used for the convergecast. The integer programming approach formulates the lifetime maximization problem as an integer
0-1 programming problem~\cite{lee2012improved}. A non-integer solution obtained using the linear programming relaxation approach is 
converted to an integer solution.
Time allocation schemes maximize the network lifetime by using different pre-calculated trees for 
data collection in a time-multiplexed manner~\cite{berman2004power,xiong2009maximize,wen2007tree}. 
Flow based approaches consider the data sent by the sensor nodes as a network flow and the lifetime maximization problem is 
formulated as a network flow problem~\cite{ordonez2004optimal,luo2011maximizing}.

The problem of building the data collection tree, in the tree-based data collection approach, so as to 
maximize the network lifetime is a well studied problem in the literature~\cite{lin2017approximation,buragohain2005power,tan2003power,wan2016construction,zhou2016maximizing}. Maximum lifetime tree construction has been shown to be
an NP-complete problem under the aggregated/ non aggregated convergecast models considered in~\cite{buragohain2005power}. 
There are several works which propose both approximation algorithms as well as heuristic approaches for the construction of a
maximum lifetime data collection tree. Various cross-layer optimization techniques are reviewed in~\cite{fei2016survey} which try to jointly optimize the 
routing, power allocation, and node scheduling schemes. A maximum lifetime tree construction approach for the fully aggregated data collection model is 
proposed and the NP-completeness of the problem is proved in~\cite{wu2008construction}. 
In the proposed algorithm, the authors start from an arbitrary tree and iteratively try to reduce the load on the bottleneck nodes 
(the nodes which are expected to die fast). Load balancing/ load switching is one of the dominant approaches used for building the maximum lifetime data collection tree~\cite{imon2013rasmalai,imon2015energy}.
An algorithm for maximum lifetime tree construction under the data collection without aggregation model is considered in~\cite{junbin2010efficient}.
The authors formulate the maximum lifetime tree construction problem as a min-max spanning tree construction problem. The algorithm starts with an arbitrary tree and iteratively transfers descendant nodes from nodes having high weights to those having low weights.
The weight of a node in a data collection tree is a function of its remaining battery energy and the number of its descendant 
nodes in the considered tree. In~\cite{imon2015energy}, the authors describe a randomized switching algorithm for load balancing among the nodes
to construct a maximum lifetime data collection tree.
Most of the maximum lifetime tree construction approaches in the literature assume that all the nodes in the network transmit at a 
fixed transmission power~\cite{wu2008construction,imon2013rasmalai}. In~\cite{lin2017approximation}, the authors consider variable transmission power levels and 
propose an approximation algorithm for the maximum lifetime data collection tree construction problem. 

Our work differs from prior work in the following respects. We formulate the problem of building a maximum lifetime data collection tree for periodic convergecast applications by considering the energy needed for the transmission/ reception of \emph{data packets} in the problem formulation; in contrast, in formulations in prior literature, the energy
 needed for the transmission/ reception of \emph{data units} is considered. 
Several data units can be combined to form a data packet. The experimental observations described in Section~\ref{uniqueaspects_motivations}
 reveal that the energy drainage of a node is mainly dependent on the \emph{number of  transmissions/ receptions of data packets} and not the  \emph{number of  transmissions/ receptions of data units}. Our work also takes into account the \emph{energy expenditure of sensors} in addition to \emph{different transmission power levels} for the radio. In contrast, most of the prior works ignore the sensor energy consumption and assume that the radio transmission power is fixed. 
In our paper, the performance of the proposed algorithm for building the maximum lifetime data collection tree is evaluated through an \emph{actual testbed implementation}, whereas a majority of the proposed lifetime maximization approaches in the literature are evaluated only via simulations.

The major contributions of this paper are as follows:
\begin{itemize}
   \item {We formulate the maximum lifetime data collection tree construction problem by considering the energy needed
 for the transmission/ reception of  \emph{data packets} instead of \emph{data units}. }
 \item {Variable transmission power levels of the wireless radio and the energy expenditure for generating sensor data are taken into account in our problem formulation.}
 \item {We prove that the above problem is NP-complete and propose an algorithm for solving it.}
  \item {The proposed algorithm is implemented in an actual WSN testbed having $20$ sensor nodes (TelosB~\cite{telosb_mote} motes programmed using TinyOS 2.1.2~\cite{tinyos}) 
 and its performance is compared with those of the widely used shortest path tree (SPT) based and random tree (RDCT) based data collection approaches.}
 \item {Our experimental evaluation demonstrates that a more balanced discharge of batteries among the nodes occurs 
 when the proposed algorithm  is used for building the maximum lifetime data collection tree as compared to the discharge under the SPT based and RDCT based approaches. Hence, our proposed algorithm results in an improved network lifetime.}
\item {The performance of the proposed algorithm in large networks is evaluated through simulation studies. Our simulations show that the proposed algorithm provides significantly higher network
 lifetime when compared with the state-of-the-art Randomized Switching for Maximizing Lifetime (RaSMaLai) algorithm~\cite{imon2015energy} as well as the SPT, minimum spanning tree (MST), and RDCT based data collection schemes.}
\end{itemize}
 
 The rest of the paper is organized as follows. In Section~\ref{network_model_pblm_statement}, the considered network model and 
 the problem formulation are described. Section~\ref{uniqueaspects_motivations} discusses the unique aspects of the problem and the 
 motivating factors. The hardness of the problem is characterized  and the proposed algorithm is described in Section~\ref{hardness_solution}. 
 Section~\ref{Performance_evaluation} presents a  performance evaluation of the proposed algorithm and the paper is concluded in Section~\ref{conclusion}.
 
\section{Network model and problem formulation}
\label{network_model_pblm_statement}

\subsection{Network Model}
\label{Network_architecture}

In sensor networks which are deployed for any periodic data collection application such as agricultural farm  
monitoring, \emph{convergecast} is the most common operation~\cite{fastcollection}.
That is, data from all the individual sensor nodes are collected at a sink node via transmissions along the edges of a tree. The sink node is connected to a powerful
data logging device called base station (e.g., a laptop) and is assumed to have strong data processing capabilities and infinite energy resources. 

\begin{figure}[h]
\centering
\subfloat[Packet relay model]{\includegraphics[scale=0.35]{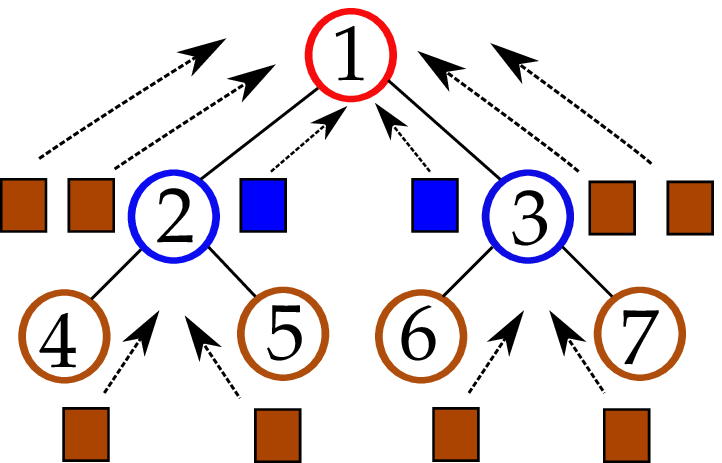}%
\label{dct_seperate_pkts}}
\hfil
\subfloat[Complete aggregation model]{\includegraphics[scale=0.35]{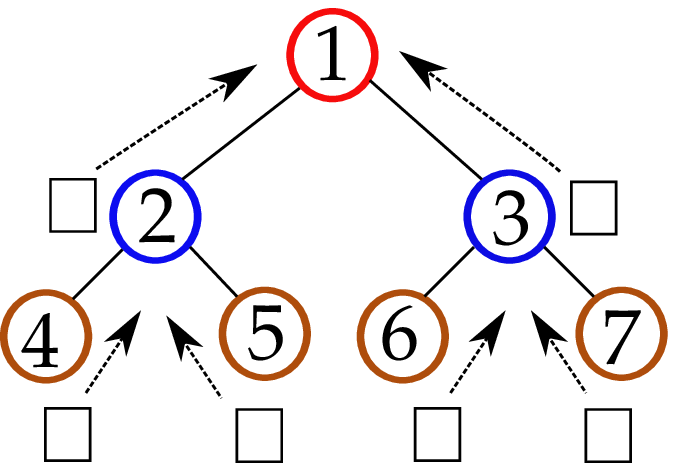}%
\label{dct_justonepkt}}
\hfil
\subfloat[Piggyback aggregation model]{\includegraphics[scale=0.35]{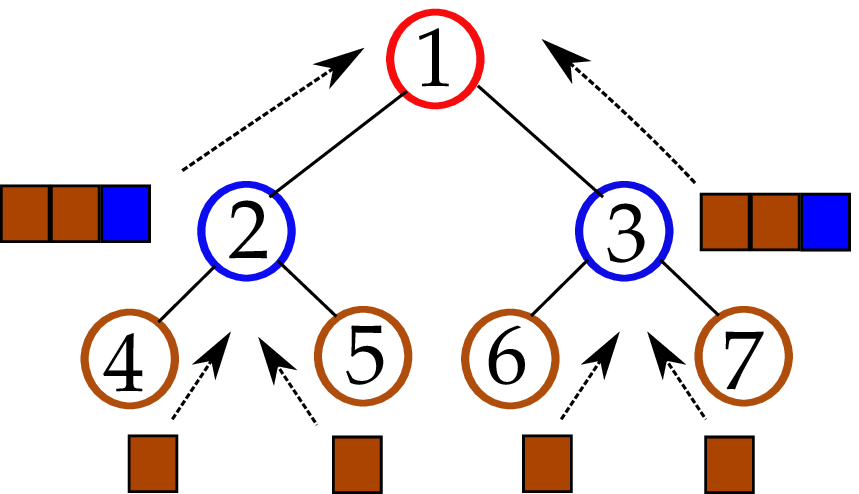}%
\label{dct_piggyback}}
\caption{Data collection along a convergecast tree}
\label{datacollection_schemes_wsn}
\end{figure}

There are mainly three approaches which are widely used for data collection in WSN applications, as shown in Fig.~\ref{datacollection_schemes_wsn}.
In the first approach (Fig.~\ref{dct_seperate_pkts}), each sensor node sends its data unit (the generated sensor data) as a data packet to its parent 
and each parent node relays the received packets from each of its children as separate packets towards the sink node along 
the data collection tree. Also, the sensor data generated by the parent node itself is forwarded as a separate data packet along the tree.
We call this network model as ``packet relay model'' (PRM).
The other two approaches employ aggregation methods to reduce the number of message transmissions occurring in the network. 
In the second approach (Fig.~\ref{dct_justonepkt}), each node receives data (e.g., 1 byte) from all of its children and applies an aggregation technique
like taking average, minimum, maximum etc. on the collected dataset (received data as well as its own generated data) and forwards 
the aggregated result (also 1 byte in the considered example) to its parent node. We call this network model as ``complete aggregation model'' (CAM).
This method is used mainly in dense deployments to sense 
parameters which have a high spatial correlation.
In the third approach, each node collects the sensor data from its children and concatenates them along with
its own sensed data to form a single data packet and forwards it to its parent node as shown 
in Fig.~\ref{dct_piggyback}. We call this network model as ``piggyback aggregation model'' (PAM). We consider a sensor network of nodes that are sparsely  deployed in an area-- such a sensor network is often deployed for covering a large area with a small number of nodes for  cost-effectiveness. Hence we consider that all the sensor readings are equally important and each reading needs to be transmitted in its entirety to the sink node. So 
the data collection in our network model is of the type PAM as shown in Fig.~\ref{dct_piggyback}. Note that the PAM has a lower packet header overhead than the PRM.

\begin{figure}[h]
 \centering
\includegraphics[scale=0.55]{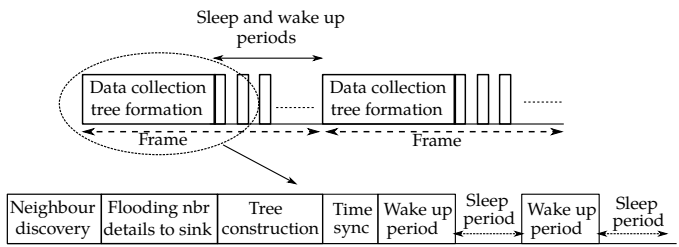}
 \caption{Sequence of operations occurring in the network}
 \label{periodic_operations_toplevel}
\end{figure}

For data collection in sensor networks, a backbone structure (tree) to route the sensed data  from each node to the sink node needs to be constructed.
The various operations occurring in the network are shown in Fig.~\ref{periodic_operations_toplevel} (for details see \cite{jobishjohnEAI}).
Time is divided into frames and each frame consists of the ``data collection tree formation phase'' followed by periodic data collection through the constructed tree as shown in Fig.~\ref{periodic_operations_toplevel}.
The ``data collection tree formation'' phase constructs the data collection tree through which each node reports its sensed data to the sink node.
Once the data collection tree is built, each node follows a synchronized periodic sleep wake-up scheme for data collection.
This helps to improve the lifetime of the network since during each ``sleep period'', the wireless transceiver and sensors in all nodes
remain in the ``OFF'' state to save energy. During each ``wake-up period'' of a frame, data collection is performed using the tree constructed in the 
``data collection tree formation'' phase of the frame.  
Each ``data collection tree formation'' phase includes various stages like neighbour discovery, flooding the neighbour details to the sink, running the tree 
construction algorithm at the sink node and synchronizing all the nodes in the network to the sink node. 
In the ``neighbour discovery'' phase, each node in the network finds its neighbours, 
along with the required transmission power levels to reliably communicate with them. Each node assigns  ``edge costs'' to 
all the communication links connecting it to its neighbours which are functions of the minimum transmission power levels required to deliver packets 
reliably to them. After neighbor discovery, the list of neighbours and edge weights of each node are
sent to the sink node through ``flooding''. The sink node builds the maximum lifetime data collection tree 
from the collected information and
synchronizes the clocks of all the nodes along the edges of the tree. Then, each node periodically reports its data to the sink node.

As stated earlier, to increase the lifetime of the network, nodes follow a periodic sleep and wake-up schedule (see Fig.~\ref{periodic_operations_toplevel}). 
Nodes are in energy saving sleep mode for a large fraction of the time.
During each wake-up period, each node senses parameters using its sensors, receives data from its child nodes, sends the data to its parent node and goes to the sleep stage. 

\subsection{Problem Formulation}
\label{pblm_statement}

Consider a sensor network consisting of $n$ sensor nodes $\{v_1,v_2, ..., v_n\}$ and a sink node $v_0$ which are
deployed for any data monitoring application.
After the ``data collection tree formation phase'' (see Section~\ref{Network_architecture}), each sensor node reports its sensed data
(one data unit, which consists of $l$ bytes) to the sink node in each wake-up period along the data collection tree.
Let $G=(V,E)$ be the undirected connectivity graph representing the sensor network, 
where $V = \{v_0,v_1,v_2, ..., v_n\}$ and $E$ represents all the communication links present in the network. Between every pair of neighbouring
nodes $u$ and $v$, there exists an edge $e_{u-v} \in E$ with cost $c_{e_{u-v}}$, which represents the 
transmission energy required for sending one data packet along the edge $e_{u-v}$ (see Section~\ref{Network_architecture}).
Each sensor node $u$ is equipped with a Lithium-Ion battery which has
a remaining energy $B_u$ at the beginning of the frame under consideration. The sink node $v_0$ is assumed to have unlimited energy, $i.e.,\ B_0 = \infty$.

Fig.~\ref{dc_example} shows an instance of a connectivity graph $G$ and two of its possible data collection trees $T_1$ and $T_2$.
Our objective is to find a data collection tree in each frame such that the lifetime of the network is maximized. Henceforth, we focus on
a single frame and study the problem of finding the best tree, say $T^*$, from the point of view of maximizing the network lifetime. 
Table~\ref{list_notations} lists the various notations used in this paper.

\begin{figure}
\centering
\subfloat[Graph $G$]{\includegraphics[scale=0.35]{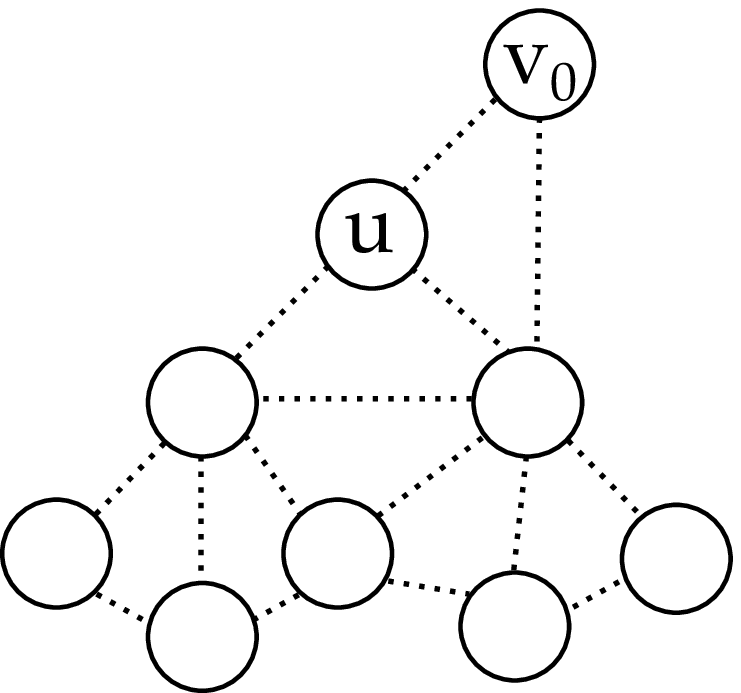}%
\label{dct_graph}}
\hfil
\subfloat[Tree $T_1$]{\includegraphics[scale=0.35]{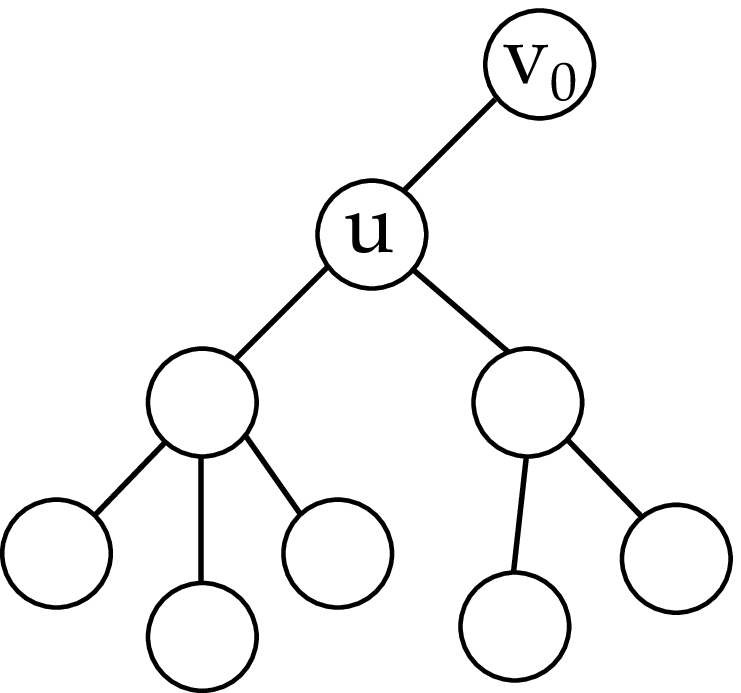}%
\label{dct_example1}}
\hfil
\subfloat[Tree $T_2$]{\includegraphics[scale=0.35]{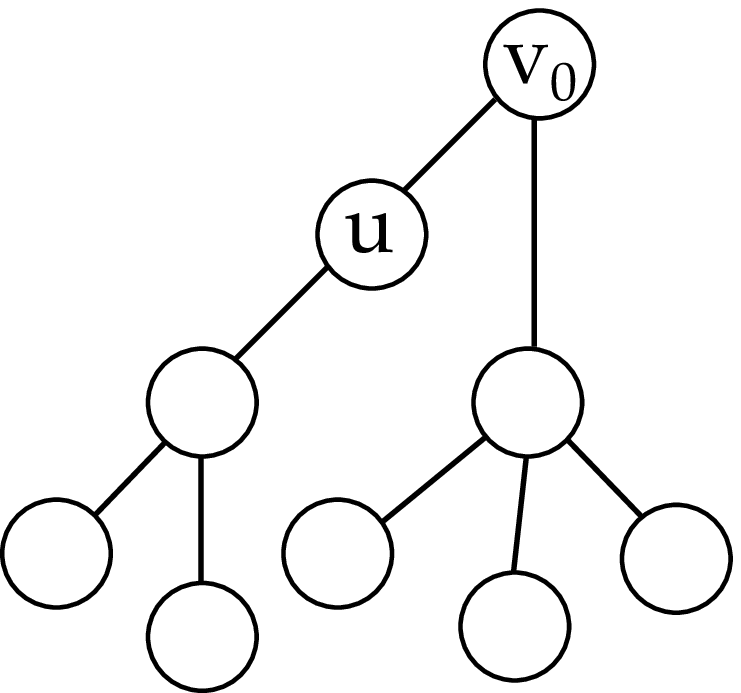}%
\label{dct_example2}}
\caption{An example of a connectivity graph and two possible data collection trees}
\label{dc_example}
\end{figure}

Recall that after a data collection tree, say $T$, is constructed, each node sends its sensed data to the sink node $v_0$ via the edges of $T$ in each wake-up period. In particular, during each wake-up period, each node $u \neq v_0$ receives all the data coming from its children in the tree $T$ and sends the concatenated data packet which contains the received data along with its own generated sensor data to its parent node,  say $p_u(T)$, in the data collection tree $T$. For any node $u \neq v_0$ in the  tree $T$, let $n_u^r(T)$ be the number of data packets
node $u$ receives from its children in the tree $T$ in a wake-up period. Let
$n_u^t(T)$ be the number of data packets node $u$ transmits to its parent  $p_u(T)$ in a wake-up period.
 Let $e_r$ be the energy spent by node $u$ for receiving 
a data packet from one of its children and $k_u$ be the energy required by  node $u$ to generate data from its own sensors;
recall that node $u$ requires energy $c_{e_{u-p_u(T)}}$ to transmit one data packet to $p_u(T)$.
Hence, the total energy consumed by node $u$ in a wake-up period is $n_u^r (T)\ . \ e_r + n_u^t (T)\ . \ c_{e_{u-p_u{(T)}}} + k_u$.
The number of wake-up periods for which node $u$ can work before its battery gets depleted, if tree $T$ is used in each wake-up period, is given by: 
\begin{equation}
 L_u^T = \dfrac{B_u}{n_u^r(T) \ . \ e_r + n_u^t(T) \ . \ c_{e_{u-p_u{(T)}}} + k_u}
\end{equation}

\begin{table}[h]
\caption{Notations used in this paper}
\label{list_notations}
\centering
\begin{tabular}{|>{\centering\arraybackslash}p{1cm}|>{\centering\arraybackslash}p{7cm}|}
\hline
Notation&Meaning\\
\hline

$n$&The number of sensor nodes in the network\\
\hline
$G = (V,E)$& Undirected connectivity graph representing the sensor network with $V = \{v_0,v_1,v_2, ..., v_n\}$ being the set of nodes in the network and $E$ being the set of edges connecting them\\
\hline
$e_{u-v}$& Edge connecting the nodes $u$ and $v$\\
\hline
$c_{e_{u-v}}$& Cost of the edge $e_{u-v}$ and it represents the amount of energy required to transmit one data packet along the edge $e_{u-v}$ \\
\hline
$B_u$ & Remaining energy of the node $u$ \\
\hline
$n_u^r(T)$ & Number of data packets received by node $u$ from its children in a wake-up period when tree $T$ used \\
\hline
$n_u^t(T)$ & Number of data packets transmitted by node $u$ to its parent in a wake-up period when tree $T$ used \\
\hline
$p_u(T)$ & Parent of node $u$ in the tree $T$ \\
\hline
$k_u$ & Energy consumed by node $u$ for generating data from its sensors \\
\hline
$L_u^T$ & Lifetime of node $u$ when data collection tree $T$ used \\
\hline
$L_{max}$ & Maximum lifetime of the network \\
\hline
$T^*$ & The optimal data collection tree \\
\hline
$\Omega$ & The set all possible data collection spanning trees of the connectivity graph $G = (V,E)$ \\
\hline
$e_r$ & Amount of energy required for receiving one data packet \\
\hline
$C_{avg}$ & Average packet transmission energy (considering different possible transmission power levels) \\
\hline
$\alpha_u$ & Current load of node $u$ \\
\hline
$l$ & Number of bytes in one data unit \\
\hline
$\beta$ & Maximum number of bytes that can be included in one data packet \\
\hline
$h_u$ & Hop distance of node $u$ from the sink node  \\
\hline
\end{tabular}
\end{table}

We define the lifetime of the network, say $L(T)$, under tree $T$ as the number of wake-up periods until a node $u\in V$ depletes
its energy if tree $T$ is used for data collection. So:
\begin{eqnarray}
L(T)  & = & \min_{u \in V} \ \{ L_u^T\} \nonumber \\ 
& = & \min_{u\in V} \left \{ \dfrac{B_u}{n_u^r(T) \ . \ e_r + n_u^t(T) \ . \ c_{e_{u-p_u{(T)}}} + k_u} \right \}. \label{optequation2}
\end{eqnarray}
The optimal data collection tree, say $T^*$, is one that has the maximum lifetime among the set of all possible spanning trees, say $\Omega$,  
and its lifetime is given by:
\begin{equation}
\label{asd}
L_{max} = L(T^*)=  \max_{T \in \Omega} L(T).
\end{equation}
By (\ref{optequation2}) and (\ref{asd}):
\begin{equation}
\label{optimaleqn_final}
L_{max}=L(T^*)=\max_{T \in \Omega} \ \min_{u\in V} \left \{ \dfrac{B_u}{n_u^r(T) \ . \ e_r + n_u^t(T) \ . \ c_{e_{u-p_u{(T)}}} + k_u} \right \}.
\end{equation}

Our objective is to design an algorithm that finds the tree, $T^*$, with the maximum lifetime.

\begin{problem}
\label{pblm1}
Find the optimal data collection tree, $T^*$, which provides the maximum lifetime $L_{max}$ in \eqref{optimaleqn_final}. 
\end{problem}

\section{Motivating factors and unique aspects of the lifetime maximization problem }
\label{uniqueaspects_motivations}
In this section, we explain the motivations behind studying the specific problem formulated in Section~\ref{pblm_statement} and the unique 
aspects of the considered problem. 

\subsection{Effect of payload length and number of packet transmissions on the energy consumption}
\label{effect_payloadlength}
To better understand the variation of the energy expenditure of the nodes with variations in the data payload size as well as the number of 
packet transmissions, we conducted a set of experiments in our testbed. Our sensor nodes use TelosB motes~\cite{telosb_mote} programmed using TinyOS 2.1.2~\cite{tinyos}. TelosB uses CC2420 as the wireless transceiver and it supports different transmission power levels~\cite{crossbow2010telosb}.
Fig.~\ref{current_consumption_wrtpayload} shows the current consumption during the transmission of a single data packet for different payload sizes and transmission
power levels. The maximum limit for the data payload length in a single packet is 114 bytes.
From Fig.~\ref{label_maxpayload}, it is clear that the change in current consumption for a payload length of 114 bytes due to the adjustment in 
transmission power level from its minimum to maximum level is approximately 10 mA for a period of 5 ms. This time reduces to approximately 2 ms (Fig.~\ref{label_minload}) if the data payload reduces to 10 bytes. 
The difference in energy consumption due to a change in the data payload size from 114 bytes to 10 bytes when data is transmitted with 
the maximum power is approximately 100 $\mu$J. This is \emph{negligible when compared with the total energy consumption of a data packet transmission}.
For example, a data packet with payload size 114 bytes consumes 1.2 $m$J when transmitted with maximum transmission power.

\begin{figure}[h]
\centering
\subfloat[Payload length - 114 bytes]{\includegraphics[scale=0.24]{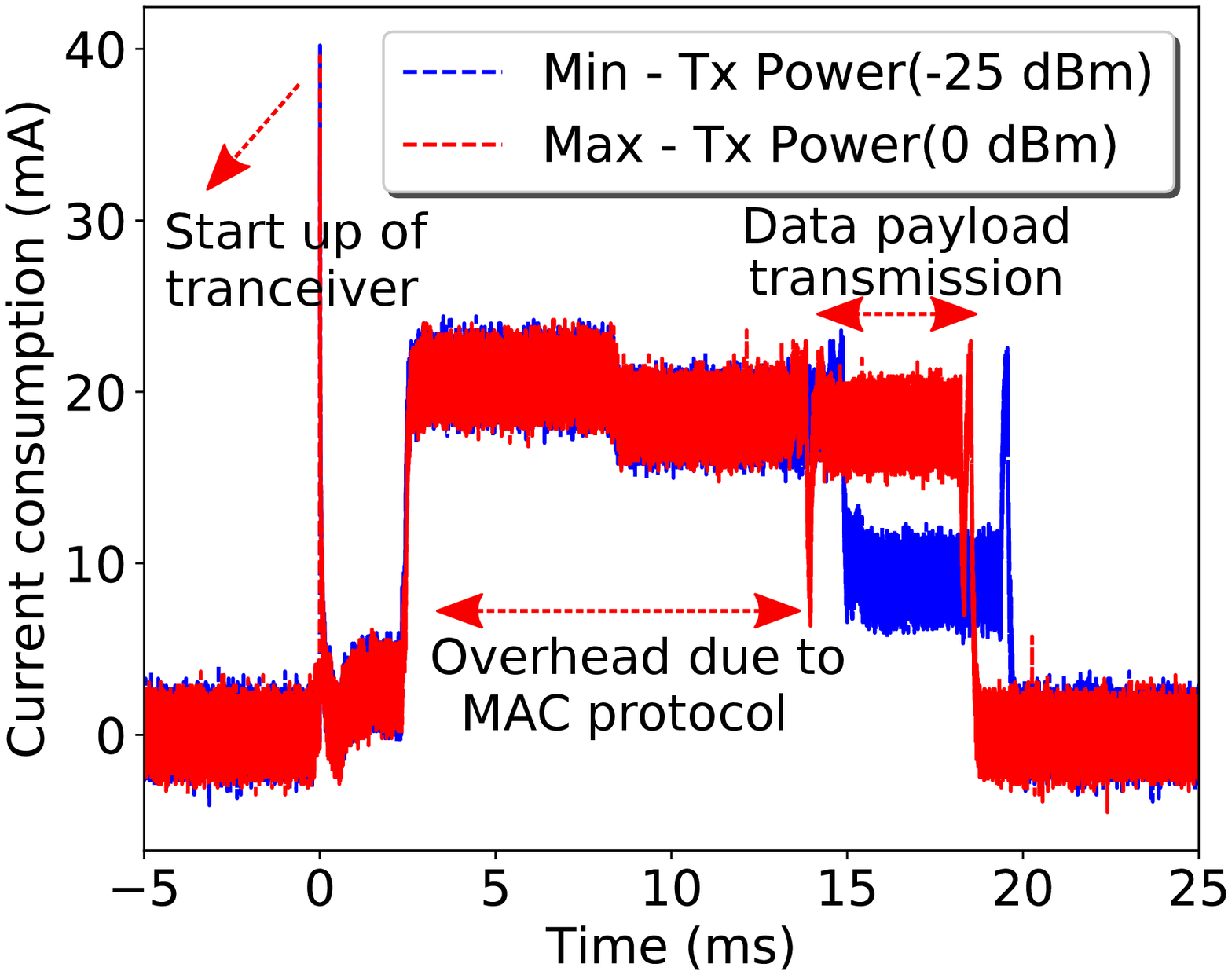}%
\label{label_maxpayload}}
\hfil
\subfloat[Payload length - 10 bytes]{\includegraphics[scale=0.225]{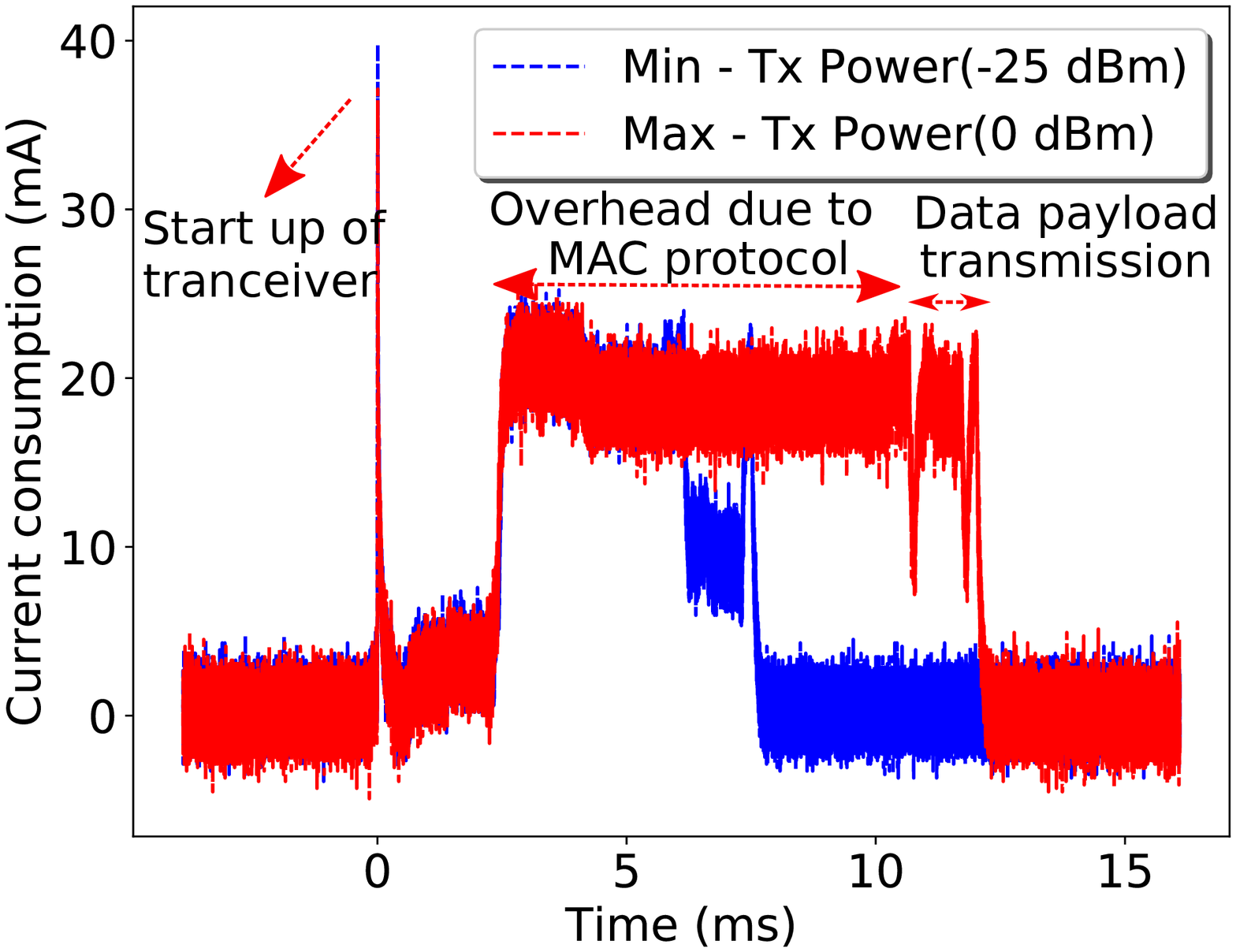}%
\label{label_minload}}
\caption{TelosB's current consumption at different payload sizes and transmission power levels}
\label{current_consumption_wrtpayload}
\end{figure}

To capture the energy consumption of a node over a long term for different data payload lengths, we programmed 3 TelosB child motes ($c_1, c_2, c_3$)
which periodically transmit one data packet (once in every 0.5 seconds) to their parents ($p_1, p_2, p_3$) as shown in Fig.~\ref{effect_payload}. 
During each wake-up period, each child node turns ON its radio, sends one data packet, turns the radio OFF and goes to the sleep stage. $c_1$, $c_2$ and $c_3$ send data packets with payload sizes of $10$ bytes, $50$ bytes and $100$ bytes, respectively. 
The reduction in the battery voltage levels as well as the remaining battery energy level for each node after periodically transmitting  data 
for 10 hours (Fig.~\ref{effect_payload}) reveal that \emph{all the three child nodes have almost the same amount of energy consumption
irrespective of the payload length}. This is because as shown in Fig.~\ref{current_consumption_wrtpayload}, the current consumption for the data transmission part
is only for a short duration (approximately 5 ms for a 114 byte data payload and 2 ms for a 10 byte data payload) in comparison with the time
required for a complete data packet transmission. The major portion of the current consumption (and hence energy) comes from various other activities like
radio startup, medium access control (MAC) etc.

\begin{figure}[h]
\centering
\subfloat[Effect of payload size]{\includegraphics[scale=0.75]{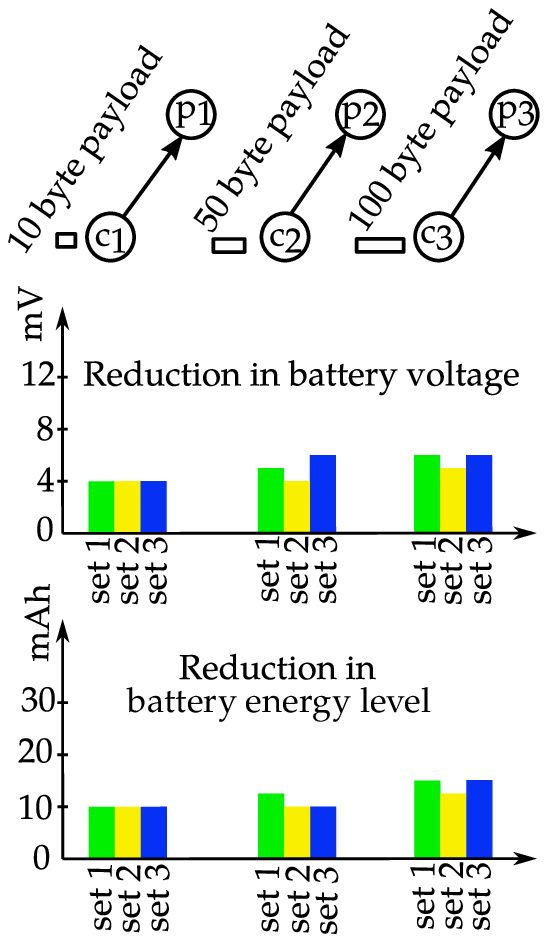}%
\label{effect_payload}}
\hfil
\subfloat[Effect of number of message transfers]{\includegraphics[scale=0.75]{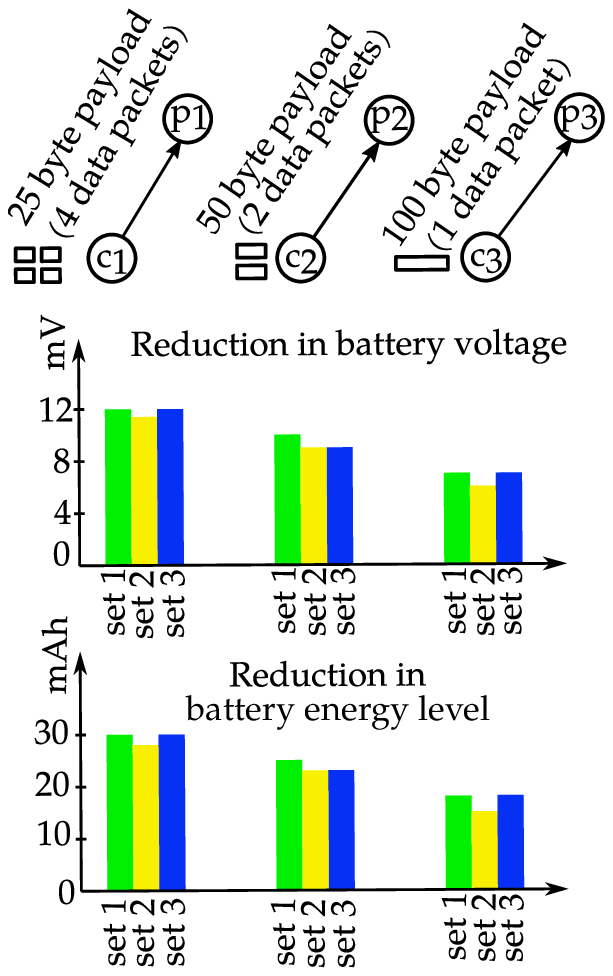}%
\label{effect_nooftransmissions}}
\caption{Reduction in battery voltage and remaining battery energy level}
\label{BV_RC_wrtpayload}
\end{figure}

To capture the energy consumption by a node for different numbers of packet transmissions, we programmed 3 TelosB child motes 
($c_1, c_2, c_3$) which periodically transmit a set of data packets to their 
parents ($p_1, p_2, p_3$) as shown in Fig.~\ref{effect_nooftransmissions}. In each wake-up period (once in every 0.5 seconds),
$c_1$ transmits $4$ data packets of payload length $25$ bytes each, $c_2$ transmits $2$ data packets of payload length $50$ bytes each,
and $c_3$ transmits $1$ data packet of payload length $100$ bytes.
Fig.~\ref{effect_nooftransmissions} reveals that \emph{the three child nodes differ substantially in the amount of reduction in the 
battery voltage levels as well as in the remaining battery energy level} after  periodically transmitting data for 10 hours.

The experiments corresponding to Fig.~\ref{effect_payload} and Fig.~\ref{effect_nooftransmissions}  were performed three times and the same trends were observed in each set. From the above detailed experiments, we observe that it is the \emph{number of data packet transmissions}, and \emph{not the number of bytes of data transmitted},  
which really creates energy imbalances among the nodes. If two nodes transmit the same number of packets, then the amount of energy consumption of the two nodes is approximately the same even if the payload lengths of the 
transmitted data packets are different. This is because the amount of energy required for the transmission of a data packet is approximately independent of the size of the data payload in the packet (see Fig.~\ref{effect_payload}).
There are several works in the literature which address the lifetime maximization problem for the convergecast operation in sensor networks 
\cite{wu2010constructing,junbin2010efficient,imon2015energy}.
These works consider the \emph{number of data units transmitted}, where a data unit is the amount of sensor data generated by a node in a time slot, as the parameter which determines the energy consumption of a node in their formulation of the lifetime maximization problem. However, the above experiments suggest that, instead, the \emph{number of data packets transmitted} needs to be considered as the parameter that determines the energy consumption of a node in the formulation. Hence, we formulate and solve the problem by considering the energy expenditure of a node in terms of the number of 
data packets it transmits (see the term $n_u^t(T)$ in (\ref{optimaleqn_final})). Note that in the piggyback aggregation model (PAM) (see Fig.~\ref{datacollection_schemes_wsn}), 
multiple data units can be concatenated to form a single data packet.
Also, from Fig.~\ref{BV_RC_wrtpayload}, it is clear that the PAM, which we use, is much more efficient than the packet relay model (PRM).

\subsection{Sensor energy consumption $k_u$}
During each wake-up period, a node mainly spends energy for three different activities: receiving data packets from its children, generating its own sensor data
and transmitting data packets to its parent. Most of the works in the literature ignore 
the energy consumed by a node for generating data from its own sensors since it is very small
in their applications. However, there can be scenarios where the sensor energy consumption cannot be ignored. 
Table~\ref{lifetable} gives an example of the energy consumed 
by different modules of a sensor node which was used in one of our earlier sensor network implementations targeted for an agricultural monitoring application~\cite{jobishjohnEAI}. 
The nodes were equipped with multiple sensors to measure soil temperature, soil moisture, atmospheric temperature and relative humidity.
The term $k_u$ in (\ref{optimaleqn_final}) captures the energy consumption of a sensor node due to activities like sensing and processing and Table~\ref{lifetable}
shows that it is not negligible. 

\begin{table}
\caption{Energy consumption of various sensor node elements: an example}
\label{lifetable}
\centering
\begin{tabular}{|c|c|c|c|}
\hline
\multirow{2}{*}{\bf{Element}}& \bf{Current} & \bf{Time} & \bf{Energy}\\
& \bf{(mA)}&\bf{(ms)}& \bf{(mJ)} \\
\hline
TelosB during transmission & 19.5 & 20 & 1.287 \\
TelosB during reception & 22 & 20 & 1.452 \\
Soil moisture sensor module & 80 & 5 & 1.32 \\
Other sensors & negligible & - & - \\
\hline
\end{tabular}
\end{table}

\subsection{Adjusting transmission power levels of the radio}
\label{adjust_txpw_level}
Recall that for a node $u, c_{e_{u-p_u{(T)}}} $ represents the energy cost for the transmission of one data packet to its parent $p_u{(T)}$.
In most wireless radios the transmission power of the radio is adjustable and hence, $c_{e_{u-p_u{(T)}}} $ is a variable parameter.
We use TelosB modules~\cite{telosb_mote} which use CC2420~\cite{cc2420datasheet} as the wireless radio. 
In one of our earlier works~\cite{john2015design}, we have studied the effect of transmission power on the wireless communication range.
For CC2420, when we increase the transmission power from the lowest level (-25 dBm) to the highest level (0 dBm), the communication range increases from 8.5 m to 56.5 m in an outdoor
non-line of sight scenario and the current consumption of the wireless module almost doubles (10 mA to 18.33 mA)~\cite{john2015design}. 
Hence, this is an important factor which we take into account in our formulation of the maximum lifetime tree construction problem. In our network model, each node sets its transmission power to the minimum value for which its data packet reaches the 
intended neighbour node. This serves two purposes--  it helps to save energy as well as reduces the interference caused 
to other nodes in the network.
In our formulation, the \emph{variable parameter} $c_{e_{u-p_u{(T)}}}$ in (\ref{optimaleqn_final}) represents the energy cost corresponding to the minimum transmission power required to send a packet from node $u$ to its parent $p_u{(T)}$.  On the other hand, in most prior works on the maximum lifetime tree construction problem, it is assumed that the radio transmission power is fixed and a \emph{constant} term is used instead of the term $c_{e_{u-p_u{(T)}}}$ in the formulation. 

\section{Complexity and Algorithm}
\label{hardness_solution}
\subsection{Complexity}
\begin{theorem}
 The maximum lifetime data collection tree problem in Problem~\ref{pblm1} is NP-complete.
\end{theorem}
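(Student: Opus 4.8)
The plan is to establish both halves of NP-completeness: membership in NP, which is routine, and NP-hardness by a polynomial reduction from the classical \textsc{Set Cover} problem, following the strategy announced in the abstract. First I would fix the \emph{decision} version of Problem~\ref{pblm1}: given the network $G=(V,E)$, the parameters $\{B_u\}$, $\{c_{e_{u-v}}\}$, $e_r$, $\{k_u\}$, and a rational threshold $L$, decide whether some spanning tree $T\in\Omega$ attains $L(T)\ge L$. Membership in NP is then immediate, since a tree $T$ is a polynomial-size certificate: given $T$ one computes each $n_u^r(T)$ and $n_u^t(T)$ (in the PAM model $n_u^t(T)=1$ and $n_u^r(T)$ is the number of children of $u$), evaluates every $L_u^T$, takes the minimum as in \eqref{optequation2}, and checks $L(T)\ge L$, all in polynomial time.

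For hardness, from a \textsc{Set Cover} instance with universe $\{x_1,\dots,x_m\}$, sets $S_1,\dots,S_k$, and target $p$, I would build a gadget graph consisting of the sink $v_0$, a single \emph{bottleneck} node $b$ that is the unique neighbour of $v_0$, one \emph{set-node} $s_j$ per set, and one \emph{element-node} $y_i$ per universe element. Membership is encoded by the edges: each $s_j$ is joined to $b$, and $y_i$ is joined to $s_j$ exactly when $x_i\in S_j$. I would tune the parameters so that $b$ is the only node able to limit the lifetime: set $e_r=1$ and give $b$ a battery $B_b$ (with $k_b=0$ and $c_{e_{b-v_0}}=0$) calibrated so that, for the chosen threshold $L$, the condition $L(T)\ge L$ is equivalent to $b$ receiving at most $p$ packets, while all set- and element-nodes receive large batteries and cheap edges so that they never bind. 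The crucial leverage is the piggyback aggregation of the PAM model: a set-node that relays any element data emits \emph{exactly one} concatenated packet toward $b$, irrespective of how many elements sit beneath it, so the receive count $n_b^r(T)$ equals the number of \emph{distinct} set-nodes feeding the bottleneck.

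With this gadget I would argue the equivalence in both directions. For the forward direction, a cover $\{S_{j_1},\dots,S_{j_p}\}$ of size at most $p$ yields a tree in which each chosen $s_{j_t}$ is attached to $b$ and every $y_i$ is attached to one chosen set containing it; then $b$ receives at most $p$ packets, so $L(T)\ge L$. For the reverse direction, from any tree with $L(T)\ge L$ I would read off the set-nodes that deliver packets to $b$: since every element must route to $v_0$ through $b$ and an element-node can attach only to a set-node that contains it, these set-nodes cover the universe, and their count is $n_b^r(T)\le p$, producing a cover of size at most $p$.

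The hard part will be forcing the bottleneck load to count \emph{exactly the number of sets of a cover}, rather than the number of elements or some inflated quantity. This requires careful choice of the element- and set-node batteries and edge costs so that no alternative, deeper or multi-hop routing is ever preferable, guaranteeing that in any tree meeting the threshold the set-nodes incident to $b$ form precisely a set cover and each contributes a single aggregated packet. I would also have to verify that all batteries, costs, and the threshold $L$ are rationals of size polynomial in the \textsc{Set Cover} instance, so that the reduction is genuinely polynomial, and to dispose of the boundary cases (empty sets, and elements covered by several chosen sets) so that the equivalence ``a cover of size $\le p$ exists $\iff$ some $T$ satisfies $L(T)\ge L$'' holds exactly.
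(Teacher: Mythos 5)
Your NP-membership argument is fine, and reducing from Set Cover is indeed the paper's strategy, but your gadget has a genuine flaw that your own ``hard part'' paragraph flags without resolving, and it cannot be patched by tuning batteries and edge costs alone. The leverage you rely on --- that under piggyback aggregation a set-node emits exactly one packet toward $b$ no matter how many elements sit beneath it --- is precisely what breaks the reverse direction. Since an element-node is adjacent to every set-node containing it, a spanning tree can alternate element/set nodes in chains. Take $U=\{x_1,x_2,x_3\}$, $S_1=\{x_1,x_2\}$, $S_2=\{x_2,x_3\}$, $p=1$: there is no cover of size $1$, yet the tree $v_0$--$b$--$s_1$ with children $y_1,y_2$ of $s_1$ and the chain $y_2$--$s_2$--$y_3$ gives $n_b^r(T)=1\le p$, because $b$'s receive count sees only its direct children while aggregation hides $s_2$ deeper in the branch. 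Conversely, the only knob that forbids such chains --- starving the element-nodes so that receiving even one packet exceeds their budget, forcing them to be leaves --- destroys the forward direction: every set-node's only remaining route to $v_0$ is its edge to $b$, so $n_b^r(T)=k$ for \emph{every} spanning tree and the threshold can never encode ``at most $p$ sets''. With a single bottleneck and full aggregation, the bottleneck simply cannot count the number of sets in a cover, so no choice of rational batteries, costs, and threshold rescues the equivalence.

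The paper escapes this dilemma with two ingredients your sketch lacks. First, it sets $l=\beta$ (one data unit per packet), deliberately \emph{disabling} aggregation in the reduced instance; packet loads then add along paths, so every set-node in a branch contributes one transmission at that branch's gateway no matter how deep it sits, and chains can no longer hide sets. Second, it adds a parallel escape branch --- a row of $k$ nodes of energy $2+a$ feeding a gateway of energy $2k-p+1+a$ --- which can absorb exactly the $k-p$ \emph{unchosen} set-nodes as leaves (each energy-$(2+a)$ node can relay at most one packet, so a set-node routed there must carry no elements), while the other gateway, of energy $n+p+1+a$, caps its branch at its own packet plus $p$ set-node packets plus all $n$ element packets. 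Unchosen sets thus have somewhere to go, and chosen sets are counted exactly, whether or not they attach directly to the gateway. To salvage your construction you would need to import both ideas, at which point you would have essentially reproduced the paper's reduction.
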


\begin{proof}
 
The NP-completeness of Problem~\ref{pblm1} is proved by reducing the set-cover problem~\cite{kleinberg2006algorithm} to it. 

The decision version of Problem~\ref{pblm1} can be stated as: given a graph representing a WSN and a number $\tau$, does there exist a tree 
with lifetime at least $\tau$? If we are 
given a solution (tree), we can easily verify in polynomial time whether its lifetime is $\ge$ $\tau$.
This proves that Problem~\ref{pblm1} is in class NP~\cite{kleinberg2006algorithm}.

Next, we reduce the set cover problem, which is known to be NP-complete~\cite{kleinberg2006algorithm}, to Problem~\ref{pblm1}. An instance of the set cover problem is as follows. Assume that there are $n$ elements, $1,2,3,...,n$, in the set $U$. Let $B_1,B_2...B_k$ be $k$ given subsets of $U$ and $p$ be a given number. 
The set cover problem is to find whether there exists any collection of at most $p$ of these subsets
 whose union is $U$. We refer to a collection of $p$ subsets whose union is $U$ as a ``$p$ set cover''.

We reduce the above instance of the set cover problem to the special case of Problem~\ref{pblm1} in which $e_r = 0$, $l=\beta,$
$k_u = a \ \forall u $, and
$c_{e_{u-v}} = 1 \ \forall e_{u-v} \in E$. $\beta$ represents the maximum number of data bytes in one data packet; so $l=\beta$ implies that each data packet can contain only one data unit.
For the reduction, we construct a connectivity graph as shown in Fig.~\ref{setcoverredution}.

\begin{figure}[h]
 \centering
\includegraphics[scale=0.6]{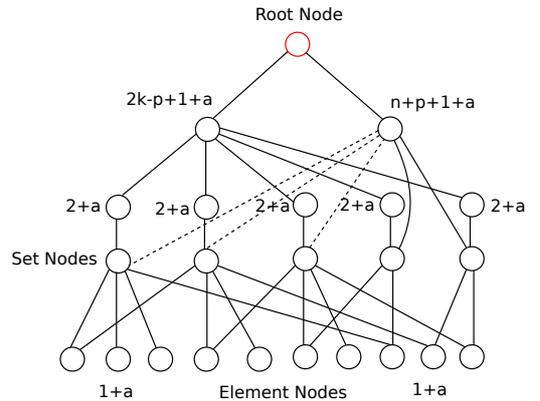}
 \caption{Reduction from set cover to Problem~\ref{pblm1}}
 \label{setcoverredution}
\end{figure}

The root node (sink) has infinite energy and is connected to two nodes having energies, $2k-p+1+a$ and $n+p+1+a$. The third row consists of $k$ nodes each with energy
$2+a$ and each node in this row is connected to the node with energy $2k-p+1+a$. The fourth row consists of $k$ nodes, each representing one of 
the given subsets, and the $i$'th node in this row has energy $|B_i|+1+a$.
All the nodes of the fourth row have a connection to the node with energy $n+p+1+a$. The last row contains $n$ nodes, each node corresponding to one of  the $n$ elements in the set $U$, and each of these nodes has connections to the nodes in the fourth row representing the subsets the corresponding element is contained in.

We will show that the graph in Fig.~\ref{setcoverredution} has a lifetime of one if and only if there exists a $p$ set cover in the above instance of the set cover problem.  

Assume that there exists a $p$ set cover. The tree with lifetime one can be constructed  by the following steps.
\begin{itemize}
 \item {Connect the $p$ nodes in the fourth row representing the $p$ subsets of the $p$ set cover to the node with energy $n+p+1+a$ (represented by dotted edges in Fig.~\ref{setcoverredution}).}
 \item {Ensure that each element node in the last row has a connection to one of the $p$ subset nodes.}
 \item {The remaining $(k-p)$ subset representing nodes from the fourth row are connected to the node with energy $2k-p+1+a$ through their corresponding nodes in the third row.}
 \item {Also, ensure that all the nodes in the third row have a connection to the node with energy $2k-p+1+a$.}
 \item {Connect the two nodes in the second row with the root.}
 \end{itemize}
 It is easy to check that the above connections constitute a tree with lifetime one.

Now assume that there is a tree with lifetime one. Since each node in the third row only has energy $2+a$, it can route data from only one 
node corresponding to a subset from the fourth row to the root node and the node with energy $2k-p+1+a$ limits the routing to $(k-p)$ set equivalent nodes.
So, the $p$ nodes remaining in the fourth row have to cover all the $n$ elements in the last row. This implies that there exists a $p$ set cover. 

The result follows.
\end{proof}

An approach similar to that used in the above proof is used in~\cite{buragohain2005power}.

\subsection{Proposed tree construction algorithm}
Here we propose an algorithm for finding a data collection tree, called Balanced energy consumption Data Collection Tree (BDCT), 
which provides a high network lifetime. 
The tree construction algorithm is outlined in Algorithm~\ref{alg:algorithm_treeconstruction}.
The algorithm is executed by the sink node which has complete information about the network.
As discussed in Section~\ref{Network_architecture}, the neighbour details of a node consist of its neighbour 
node ids along with the minimum transmission energy required for reliable communication with them. 
The tree building starts with the sink node and during each iteration an edge is added to the partially built tree (subtree) which connects 
a node which is not yet covered in the current subtree. 
Let $S_T$ be a set of paired values where each pair in $S_T$ represents a node id of a node that is already added to the data collection tree,
along with its hop distance from the sink node.
 For example, if the pair $(u,h_u)$ belongs to  $S_T$, then node $u$ is $h_u$ hops away from the sink node
in the current data collection subtree.
Let $E_T$ be the set of possible edges which can be added next, to connect a node which is not yet connected, to the current subtree. 

\begin{algorithm}
  \caption{Algorithm for BDCT construction}
 \begin{algorithmic}[1]
 \renewcommand{\algorithmicrequire}{\textbf{Input:}}
 \renewcommand{\algorithmicensure}{\textbf{Output:}}
 \REQUIRE $G=(V,E)$, $c_{e_{u-v}} \forall e_{u-v} \in E$, $B_u \forall u \in V$  
 \ENSURE  A data collection tree, $T$
 \\ \textit{Initialisation} : $S_T = \{\}, E_T = \{\}, f_{E_T}(e_{u-v}) = \{ \}$
  \STATE Add sink node, $(s,0)$, to $S_T$
  \STATE Add $e_{u-s}$ to $E_T$, $\forall u \in V$ such that $e_{u-s} \in E$
  \WHILE {(any uncovered node exists)}
   \STATE For each edge, say $e_{u-v}$, in $E_T$, assign a real number $f_{E_T}(e_{u-v})$ using (\ref{mapping_e_n})
   \STATE Select next edge, say $e_{next}$, to be added to the subtree using (\ref{shortlist_eqn})
   \STATE Update the subtree, $S_T$, and $E_T$
  \ENDWHILE
 \RETURN $T$ 
 \end{algorithmic} 
 \label{alg:algorithm_treeconstruction}
 \end{algorithm}

Initially, $S_T = \{ \ \},  E_T = \{ \ \}$.
As a first step, the algorithm adds the sink node $s$ to $S_T$ and thus $S_T = \{ (s,0) \}$. Now, $E_T$ will contain all the edges 
which connect the sink node $s$ to any other node in the connectivity graph. During each iteration, the algorithm uses a mapping $f_{E_T}(e_{u-v})$ from each $e_{u-v} \in E_T$ given by (\ref{mapping_e_n}), where $u$ is an uncovered node and $v$ is a covered node, and using this mapping,  selects an 
edge $e_{next}$ from $E_T$ (see (\ref{shortlist_eqn})). The edge $e_{next}$ is added to the current data collection subtree. 
 
\begin{multline}
\label{mapping_e_n}
f_{E_T}(e_{u-v}) = \min \Bigg(\dfrac{B_u}{c_{e_{u-v}} + h_v . C_{avg} + k_u}, \\
\dfrac{B_v}{h_v.C_{avg}\ceil{\frac{(\alpha_v+2).l}{\beta}} + (\alpha_v + 1).e_r + k_v}\Bigg) \forall e_{u-v} \in E_T 
\end{multline}

\begin{equation}
\label{shortlist_eqn}
 e_{next} = \underset{e_{u-v} \in E_T}{\argmax} \{f_{E_T}(e_{u-v})\} 
\end{equation}
$C_{avg}$ is the average packet transmission 
energy (considering different possible transmission power levels) and $\alpha_v$ is the current load (number of children connected) of node~$v$. 
If multiple edges achieve the maximum in the term $\underset{e_{u-v} \in E_T}{\argmax} \{f_{E_T}(e_{u-v})\}$  in (\ref{shortlist_eqn}), then the edge $e_{u-v}$ with the
highest non-minimizing term in the $\min$ in (\ref{mapping_e_n}) is selected as $e_{next}$. The mapping $f_{E_T}(e_{u-v})$ outputs a real number, say $n_{e_{u-v}}$, for each $e_{u-v}$ which is an indicative measure of how advisable it is, to connect
the uncovered node $u$ to a covered node $v$ as $u$'s parent using the edge $e_{u-v}$. 

The intuition behind the function $f_{E_T}(e_{u-v})$ is as follows: 
\begin{enumerate}
 \item {It takes a higher value if nodes $u$ and $v$ have higher battery voltages, $B_u$ and $B_v$, respectively. This means nodes with higher battery voltages are preferred to be added first to the data collection subtree and hence they are more likely to act as parent nodes for uncovered nodes.}
  \item {The term $ h_v . C_{avg}$ in (\ref{mapping_e_n}) is used to indirectly penalize long hop paths to the sink from other nodes.}
 \item {Addition of a child node $u$ to $v$ will increase the current load (number of child nodes) on $v \ (\alpha_v)$ by one and hence will increase 
 the energy expenditure of node $v$. The new energy expenditure of node $v$ in packet reception and transmission after the addition of the edge $e_{u-v}$ is addressed by the term  $h_v.C_{avg}\ceil{\frac{(\alpha_v+2).l}{\beta}} + (\alpha_v + 1).e_r$ in $f_{E_T}(e_{u-v})$.}
 \end{enumerate}

\begin{figure*}
\centering
\subfloat[Connectivity graph]{\label{BDCT_example_graph} \includegraphics[scale=0.6]{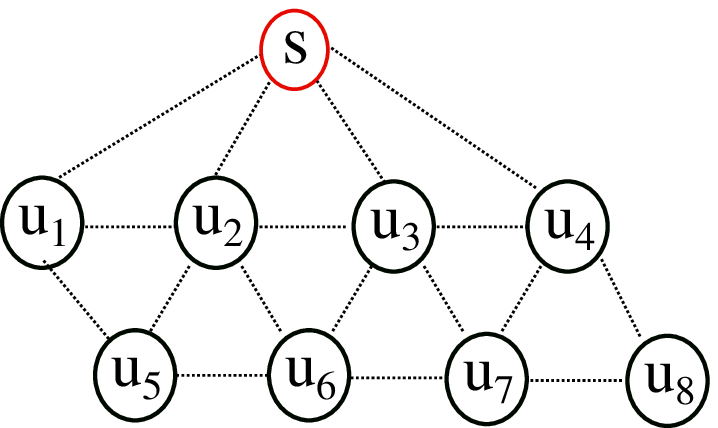}}
\hfil
\subfloat[Data collection subtree after the addition of edge $e_{u_1-u_2}$]{\label{heuristic_step3} \includegraphics[scale=0.6]{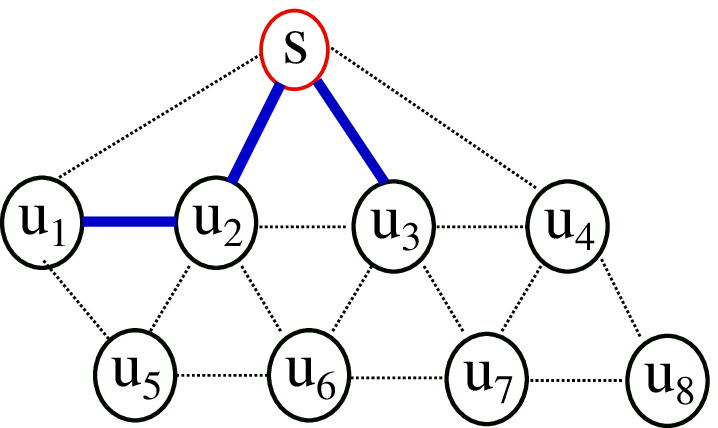}}
 \hfil
\subfloat[Data collection tree after the algorithm termination]{\label{heuristic_step5} \includegraphics[scale=0.6]{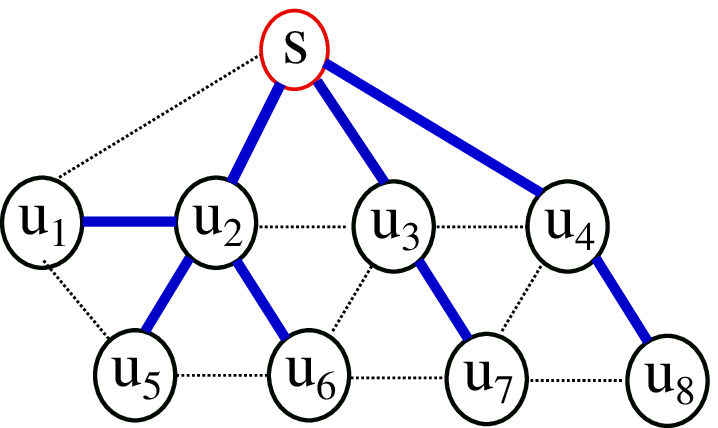}}
\hfil
\caption{An example, which illustrates the execution of the proposed BDCT construction algorithm}
\label{BDCT_example}
\end{figure*}

We illustrate the construction of a tree using the proposed algorithm using an example, which is shown in Fig.~\ref{BDCT_example}. 
Consider the network graph shown in Fig.~\ref{BDCT_example_graph}.
During the first step, addition of the sink node $s$ to $S_T$ results in $S_T = \{ (s,0) \}$,
$E_T = \{e_{u_1-s},e_{u_2-s},e_{u_3-s},e_{u_4-s} \} $ and the corresponding mapping 
$f_{E_T}(e_{u-v}) \in \{n_{e_{u_1-s}},n_{e_{u_2-s}},n_{e_{u_3-s}},n_{e_{u_4-s}} \} $. For the current $E_T$,  
\begin{multline*}
f_{E_T}(e_{u-v}) \in \bigg\{\min \left( \dfrac{B_{u_1}}{c_{e_{u_1-s}}+0+k_{u_1}}, \infty\right), \\
\min \left( \dfrac{B_{u_2}}{c_{e_{u_2-s}}+0+k_{u_2}}, \infty\right), \min \left( \dfrac{B_{u_3}}{c_{e_{u_3-s}}+0+k_{u_3}}, \infty\right), \\
 \min \left( \dfrac{B_{u_4}}{c_{e_{u_4-s}}+0+k_{u_4}}, \infty\right) \bigg\}
\end{multline*}

In the current iteration, let $e_{next}$ be $e_{u_2-s}$.
$u_2$ gets marked as a covered node and its hop distance from the sink node is updated by adding one hop to its parent's hop count ($h_{u_2} = h_s + 1 = 1$).
The load of the parent node is also incremented by one; $\alpha_s = \alpha_s + 1 =1$.

In the next iteration, suppose $e_{next} = e_{u_3-s}$ gets connected to the data collection subtree; after updation, we get $h_{u_3} = 1$ and $\alpha_s = 2$.
The updated $S_T = \{(s,0), (u_2,1), (u_3,1)\}$ and $E_T = \{e_{u_1-s},e_{u_4-s},e_{u_1-u_2},e_{u_5-u_2},e_{u_6-u_2},e_{u_6-u_3}, e_{u_7-u_3},e_{u_4-u_3} \}$.
The corresponding mapping 
\begin{align*}
f_{E_T}(e_{u-v}) & \in \Bigg\{\min \left( \dfrac{B_{u_1}}{c_{e_{u_1-s}}+0+k_{u_1}}, \infty\right),  \ ... \\
&  \min \Bigg( \dfrac{B_{u_1}}{c_{e_{u_1-u_2}}+h_{u_2}.C_{avg}+k_{u_1}},\\
&  \dfrac{B_{u_2}}{h_{u_2}.C_{avg}\ceil{\frac{(\alpha_{u_2}+2).l}{\beta}} + (\alpha_{u_2} + 1).e_r + k_{u_2}}\Bigg), \ ... \ \Bigg\}\\
\end{align*}

During this iteration, assume that all the nodes except $u_1$ have very high energy and therefore $e_{next}$ 
is decided by the terms $\dfrac{B_{u_1}}{c_{e_{u_1-s}}+0+k_{u_1}}$ and $\dfrac{B_{u_1}}{c_{e_{u_1-u_2}}+h_{u_2}.C_{avg}+k_{u_1}}$. 
The first term corresponds to connection of $u_1$ directly to $s$ while the second term corresponds to connection of $u_1$ to $s$ via $u_2$. 
Assume that the energy required for transmitting a data packet directly from $u_1$ to $s$ ($c_{e_{u_1-s}}$) is greater than that of 
the multi-hop path ($c_{e_{u_1-u_2}}+h_{u_2}.C_{avg}$); then, edge $e_{u_1-u_2}$ is added to the data collection subtree. The resulting subtree is shown in Fig.~\ref{heuristic_step3}.

Assume that during the next iteration, the edge $e_{u_4-s}$ gets added to the data collection subtree. 
The updated $S_T = \{(s,0), (u_2,1), (u_3,1), (u_1,2), (u_4,1)\}$ and
$E_T = \{e_{u_5-u_2},e_{u_6-u_2},e_{u_6-u_3},e_{u_7-u_3},e_{u_5-u_1},e_{u_7-u_4},e_{u_8-u_4} \}$. 
The corresponding mapping $ f_{E_T}(e_{u-v}) \in \{...\ , n_{e_{u_6-u_2}},n_{e_{u_6-u_3}},\ ...\ \}$. That is,

\begin{align*}
f_{E_T}(e_{u-v}) & \in \Bigg\{ ...,\  \min \Bigg( \dfrac{B_{u_6}}{c_{e_{u_6-u_2}}+h_{u_2}.C_{avg}+k_{u_6}},\\
&  \dfrac{B_{u_2}}{h_{u_2}.C_{avg}\ceil{\frac{(\alpha_{u_2}+2).l}{\beta}} + (\alpha_{u_2} + 1).e_r + k_{u_2}}\Bigg), \\
& \min \Bigg( \dfrac{B_{u_6}}{c_{e_{u_6-u_2}}+h_{u_2}.C_{avg}+k_{u_6}},\\
&  \dfrac{B_{u_3}}{h_{u_3}.C_{avg}\ceil{\frac{(\alpha_{u_3}+2).l}{\beta}} + (\alpha_{u_3} + 1).e_r + k_{u_3}}\Bigg),\ ... \Bigg\}\\
\end{align*}
During this iteration,  assume that two edges, $e_{u_6-u_2}$ and $e_{u_6-u_3}$, are both maximizers in the term ${\max} \{f_{E_T}(e_{u-v})\}$; so $n_{e_{u_6-u_2}}$ and $n_{e_{u_6-u_3}}$ are equal.
Hence, in this iteration there are two edges which can be added next to the data collection subtree. In such iterations, we propose to add the edge which has the higher second term in the $\min$ in (\ref{mapping_e_n}) among the short-listed edges ($e_{u_6-u_2}$ and $e_{u_6-u_3}$). This indirectly 
tries to connect a node to a parent which has higher lifetime. Thus our algorithm selects $e_{next} = e_{u_6-u_2}$ if:
\begin{multline*}
 \dfrac{B_{u_2}}{h_{u_2}.C_{avg}\ceil{\frac{(\alpha_{u_2}+2).l}{\beta}} + (\alpha_{u_2} + 1).e_r + k_{u_2}} > \\
\dfrac{B_{u_3}}{h_{u_3}.C_{avg}\ceil{\frac{(\alpha_{u_3}+2).l}{\beta}} + (\alpha_{u_3} + 1).e_r + k_{u_3}}
\end{multline*}
In this way the algorithm proceeds and terminates once all the nodes in the network are covered or added to the data collection subtree. The constructed 
data collection tree for the given example is shown in Fig.~\ref{heuristic_step5}.

\section {Performance Evaluation}
\label{Performance_evaluation}
First, in Section~\ref{SSC:testbed:based:evaluation}, we evaluate the  performance of the proposed algorithm via its actual implementation on a WSN testbed.  Then, in Section~\ref{SSC:simulations}, we present a performance evaluation of the proposed algorithm in large networks through simulations. 

\subsection{Testbed based performance evaluation of proposed algorithm}
\label{SSC:testbed:based:evaluation}
\subsubsection{Experimental procedure}
The performance of the  proposed lifetime maximization algorithm was evaluated through its actual implementation on a WSN testbed. The testbed
consists of 20 sensor nodes which are installed in an indoor environment as shown in Fig.~\ref{placement_nodes}. Each sensor node consists of a TelosB mote and is powered using a 
Li-Ion battery of capacity 2200 mAh. We use the piggyback aggregation model (PAM) (see Section~\ref{Network_architecture}) for data collection. 

\begin{figure}
\centering
\subfloat[Locations of sensor nodes in the testbed]{\includegraphics[scale=0.38]{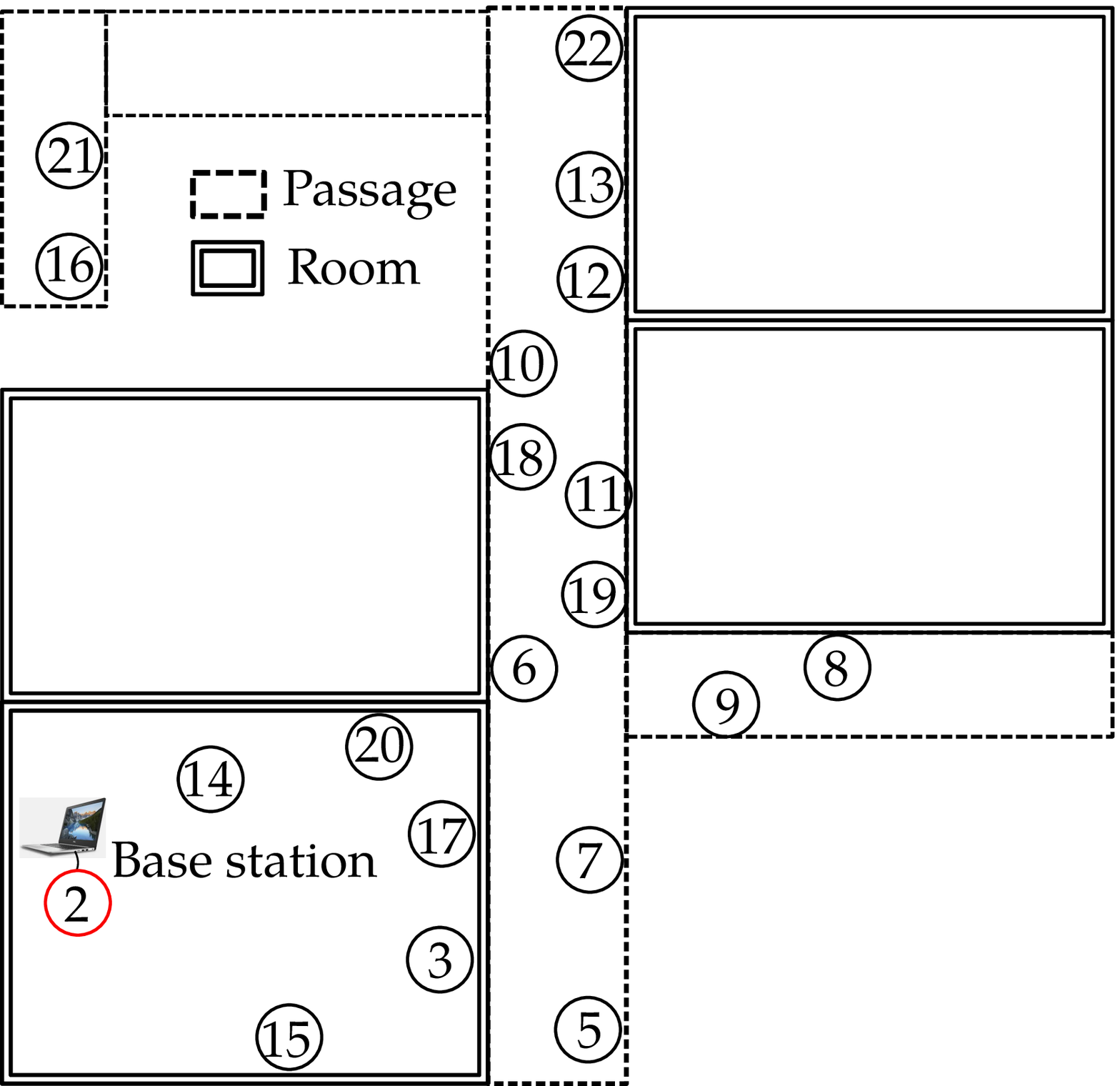}}
\hfil
\subfloat[Sensor node]{\includegraphics[scale=0.45]{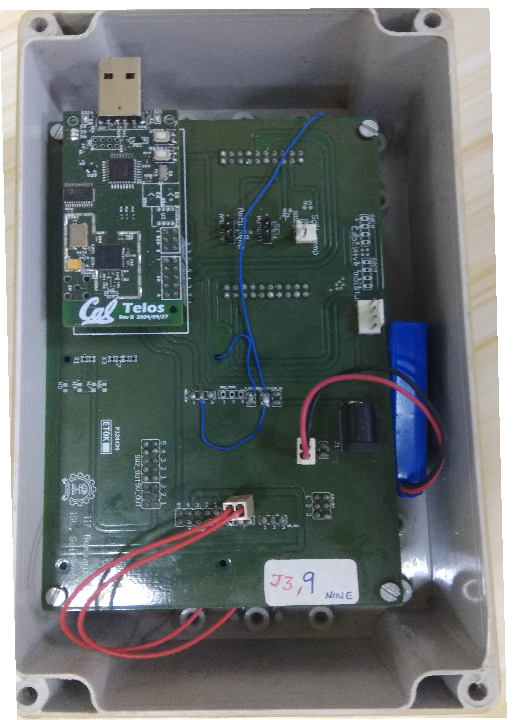}}
 \hfil
\caption{The wireless sensor network testbed}
\label{placement_nodes}
\end{figure}

In this section, the performance of the proposed lifetime maximization approach, BDCT, is compared with those of the shortest path tree (SPT) algorithm and random data collection tree (RDCT) algorithm. SPT is constructed by running Dijkstra's algorithm~\cite{Dijkstra} at the sink node once it has collected 
the network connectivity graph with weights assigned to the edges in the same manner as detailed in Section~\ref{Network_architecture}. The RDCT construction approach starts 
with the sink node and randomly adds an uncovered node as a child in each step to the partially built data collection tree until all the nodes are covered.

Different operations occur in the network deployed in the testbed as explained in Section~\ref{Network_architecture} and are shown in Fig.~\ref{periodic_operations_toplevel}.
The network under consideration is a homogeneous sensor network. That is, each node in the network is equipped with the same types of sensors and 
hence different nodes consume constant and equal amounts of energies for generating data from their 
sensors in each wake-up period (represented by $k_u$ for node $u$ in (\ref{optimaleqn_final})). Considering these facts, the difference in the energy consumption of different nodes mainly arises from 
the periodic data collection phase because of different numbers of descendants that different nodes have
in the data collection tree. A node with a higher number of descendants is expected to spend more energy and vice-versa.

The reduction in the battery voltage of a node due to a single packet transmission (even with maximum payload length) is very small.
In order to have a noticeable battery voltage reduction, there have to be several packet transmissions by the node. So for the lifetime 
performances of different algorithms (BDCT, SPT and RDCT) to be noticeably different, a large number of packets must be transmitted by nodes 
using  the trees constructed by these algorithms.  Hence, the performances of different algorithms can be compared in two ways. 

The first approach is to perform the convergecast operation along the data collection tree constructed by each algorithm for a large number of wake-up periods
so that there is sufficiently enough discharge in the battery voltage of each node in the network. This approach has several practical 
difficulties. Even though the nodes in the network are synchronized during the data collection tree formation stage, all nodes are not perfectly 
synchronized to each other. Because of synchronization errors, some nodes may spend more time in the active stage and drain more energy in each
wake-up period than others. The synchronization errors between nodes grow with each wake-up period because of clock drifts~\cite{karl2007protocols} and it calls 
for the execution of a periodic synchronization strategy, which will be another energy overhead. Note that the energy consumption in synchronization are not the same for all the nodes in the network, which may lead to an unfair comparison of the lifetime performances of different algorithms.

The second approach for the performance comparison, which we use, is to consider that each node has a very large amount of data which needs to be sent to the sink node in each wake-up period. So even when data collection happens only for one wake-up period, there is sufficiently noticeable discharge in the battery voltages. Thus, in our evaluation strategy, we considered each node to have $30,000$ generated data packets ($100$ bytes each) to transfer to 
its parent node in one wake-up period, in addition to the data packets received from its descendants. For example, a node with two descendants has to send a total of $90,000$ packets to its parent node.

Once all the nodes in the network are turned ON, they enter into the neighbour discovery phase and the sink node constructs the
data collection tree as detailed in Section~\ref{Network_architecture} using a particular tree construction approach (BDCT or SPT or RDCT).
Once the data collection tree is built, during the wake-up period, each node receives data packets from its children, sends them along with its 
generated data packets to its parent node and 
goes to the sleep stage. Thus, data is collected at the sink node from all the nodes in the network.
We call this process as one ``data collection round''. This process is performed for seven continuous data collection rounds, which 
constitutes one test case. Each node in the testbed is equipped with an approximately fully charged lithium ion battery at the beginning of a test case.
Even though the batteries are almost fully charged, the initial voltages of the batteries are not exactly the same. They are observed to be within the range
4.18 - 4.2 volts. The battery voltage of each node is measured before and after each data collection round.
All the activities occurring in one test case as well as in one data collection round are shown in Fig.~\ref{testcase_details}.
Three test case measurements are carried out for each tree construction algorithm (BDCT, SPT and RDCT).

\begin{figure}[h]
\centering
\includegraphics[scale=0.4]{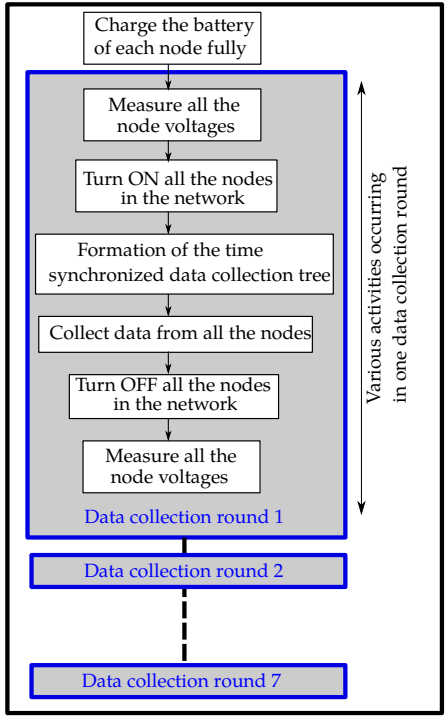}
\caption{Activities occurring in one test case}
\label{testcase_details}
\end{figure}

Now, a fairness index~\cite{jain1984quantitative} is a quantitative measure for the degree of fairness of resource allocation in a distributed system.
\emph{Jain's fairness index}~\cite{jain1984quantitative} is a measure of fairness of the resources allocated to $n$ users, where user $i$ receives value  (amount of resource) $x_i$, and is defined as: 
\begin{equation}
 J(x_1,x_2,x_3,...,x_n) = \dfrac{(\sum_{i=1}^{n} x_i)^2}{n \times \sum_{i=1}^{n} x_i^2}
\end{equation}
Jain's fairness index lies between 0 and 1.
We use the Jain's fairness index on the battery voltages after each data collection round as a measure to identify how close to each other the battery voltages are. 
If all the nodes have the same battery voltage, $J$ takes the value $1$. If $p$ nodes out of $n$ nodes have equal battery voltages and the remaining
nodes have zero battery voltage, then $J$ takes the value $\dfrac{p}{n}$. 
When the battery voltages are close to each other, the Jain's fairness index is close to one and vice-versa. Various other properties of the Jain's fairness index are detailed in~\cite{jain1984quantitative} and~\cite{chiu1989analysis}.

\subsubsection{Experimental Results}

\begin{figure}[h]
\centering
\includegraphics[scale=0.3]{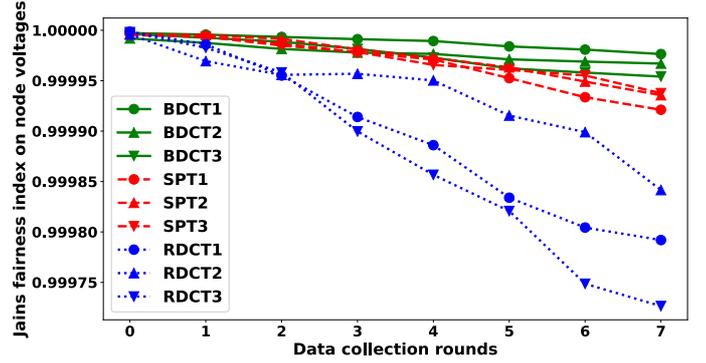}
\caption{Comparison of the Jain's fairness indices under the proposed approach, shortest path tree and random tree approaches}
\label{performace_propo_djk}
\end{figure}

Fig.~\ref{performace_propo_djk} compares the data collection performance of the proposed approach with those of the shortest path tree and random data collection tree approaches.
BDCT1, BDCT2, BDCT3 are three test cases where the proposed lifetime maximization  
algorithm is used for building the data collection tree. SPT1, SPT2 and SPT3 (respectively, RDCT1, RDCT2 and RDCT3) are three test cases where the sink node builds the shortest path tree using 
Dijkstra's algorithm (respectively, builds a random tree).
The major observations from Fig.~\ref{performace_propo_djk} are as follows:

\begin{enumerate}
 \item {In each test case, the Jain's fairness index at the start ($0^{th}$ data collection round)  is close to one, which indicates that the
 initial battery voltages for different nodes are close to each other. However, the Jain's fairness index is not exactly one,  which indicates that all nodes do not have exactly the same voltage to start with even though they are fully charged. Also, this value is slightly different for different test cases as well.}
  \item {The Jain's fairness index decreases as the data collection round number increases for all the three  data collection tree construction algorithms. Intuitively, this is due to the following reason.
In the WSN, some nodes are far from the sink and others are close to it.  Correspondingly, each tree has leaf nodes as well as nodes close to the root and hence there is a difference in the energy consumption of different nodes in a 
 particular data collection round, leading to unequal battery voltage levels. This results in the reduction of Jain's fairness index with the data collection round number.}
 \item {The \emph{rate of decrease in the Jain's fairness index is least for BDCT and most for RDCT}. This indicates that 
 the \emph{proposed BDCT collects data in a more load balanced manner compared to SPT and RDCT}. The proposed approach utilizes the nodes with higher battery energy to relay more data packets. Thus BDCT provides a longer lifetime for the network compared to SPT and RDCT. The same trend is consistently observed 
in all the test cases studied.}
 \end{enumerate}
 
 \begin{figure*}[h]
\centering
\subfloat[BDCT - test case 1]{\includegraphics[scale=0.21]{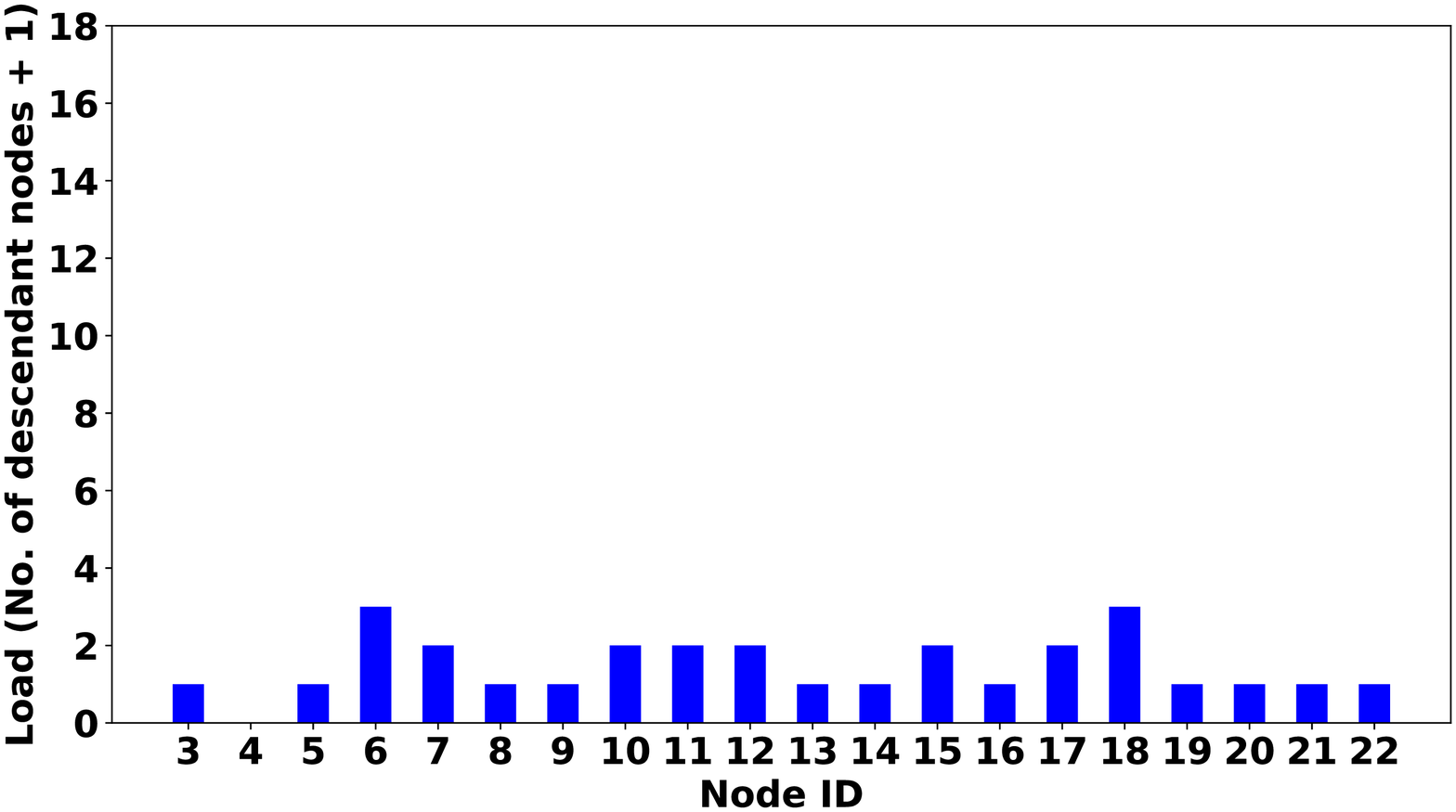}}
\hfil
\subfloat[BDCT - test case 2]{\includegraphics[scale=0.21]{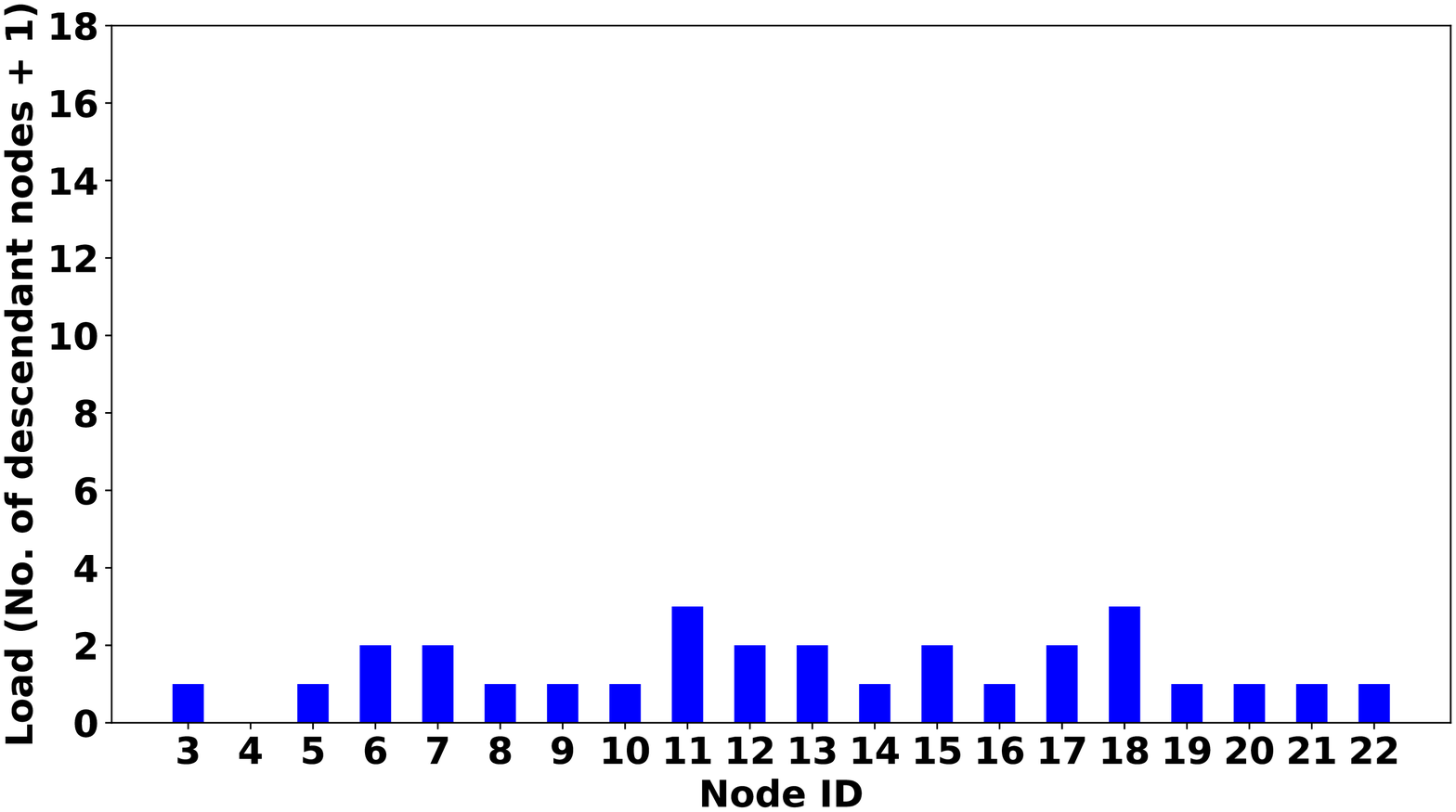}}
\hfil
\subfloat[BDCT - test case 3]{\includegraphics[scale=0.21]{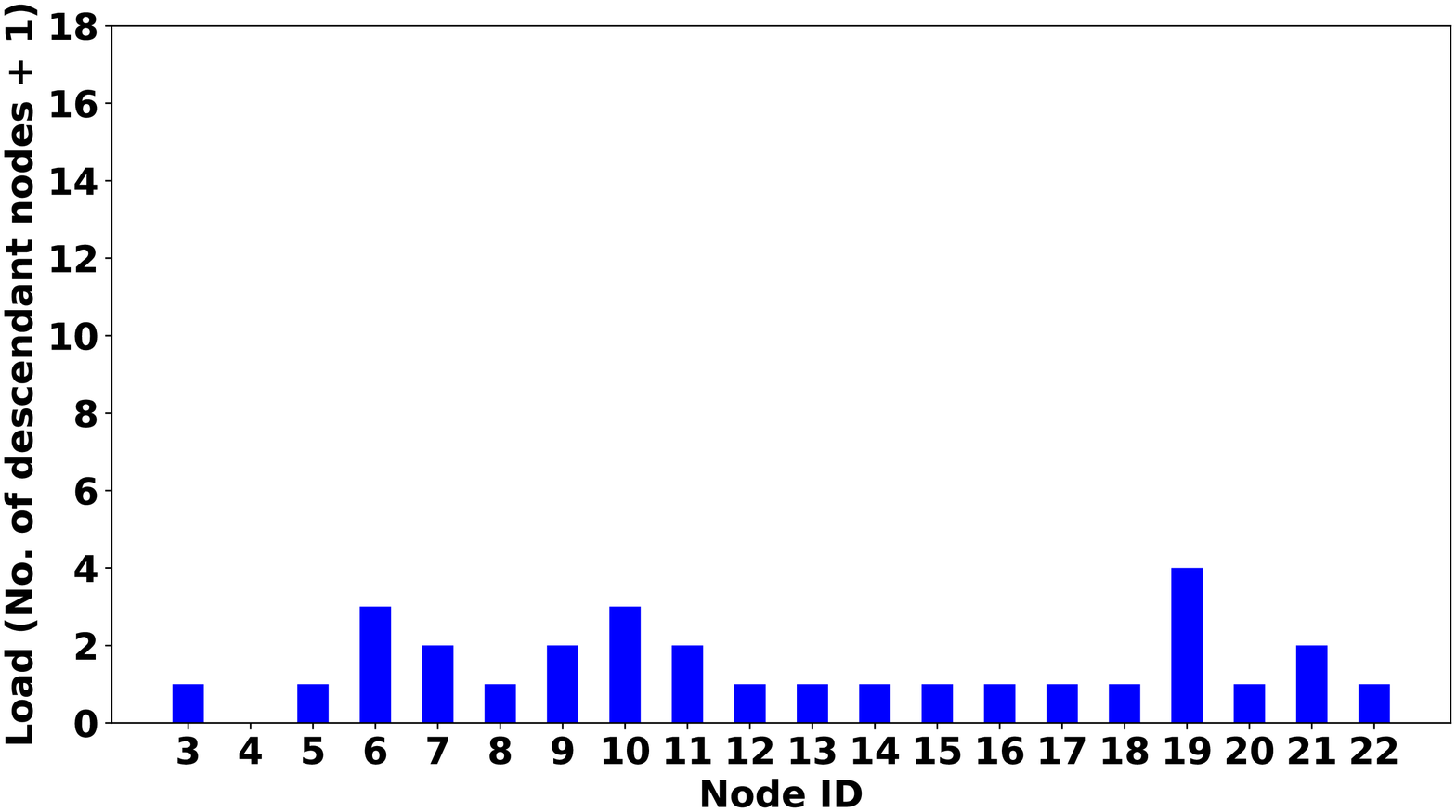}}
\hfil
\subfloat[SPT - test case 1]{\includegraphics[scale=0.21]{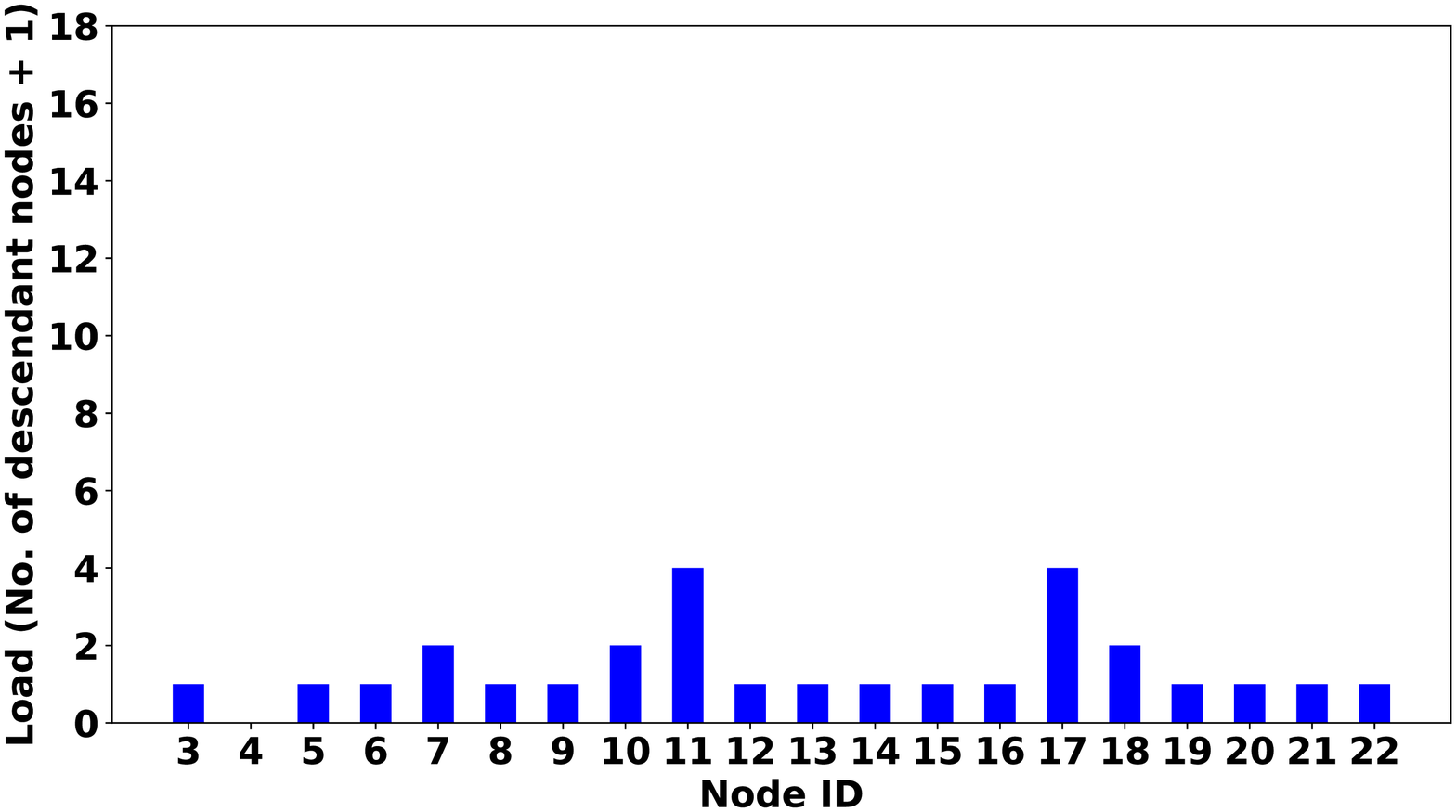}}
\hfil
\subfloat[SPT - test case 2]{\includegraphics[scale=0.21]{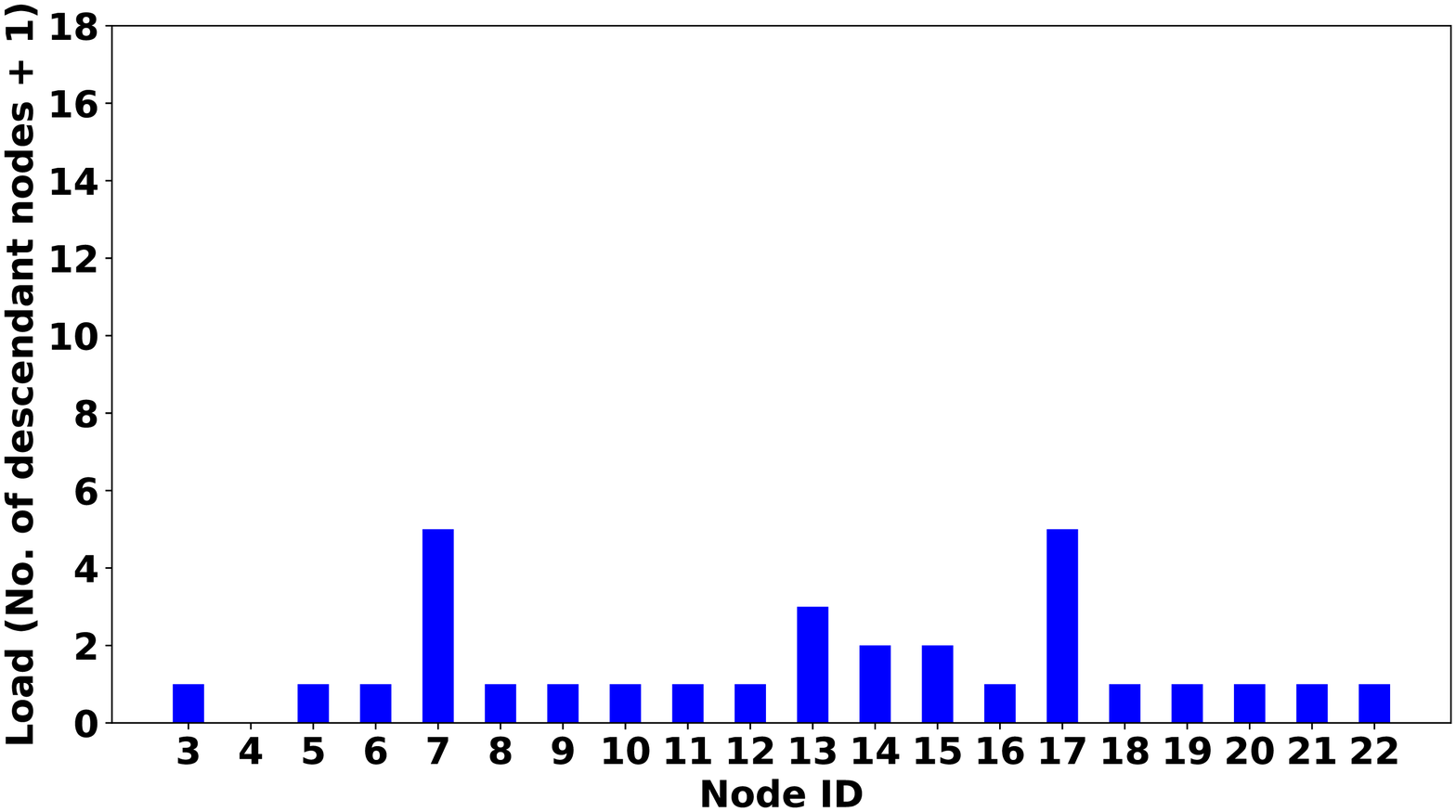}}
\hfil
\subfloat[SPT - test case 3]{\includegraphics[scale=0.21]{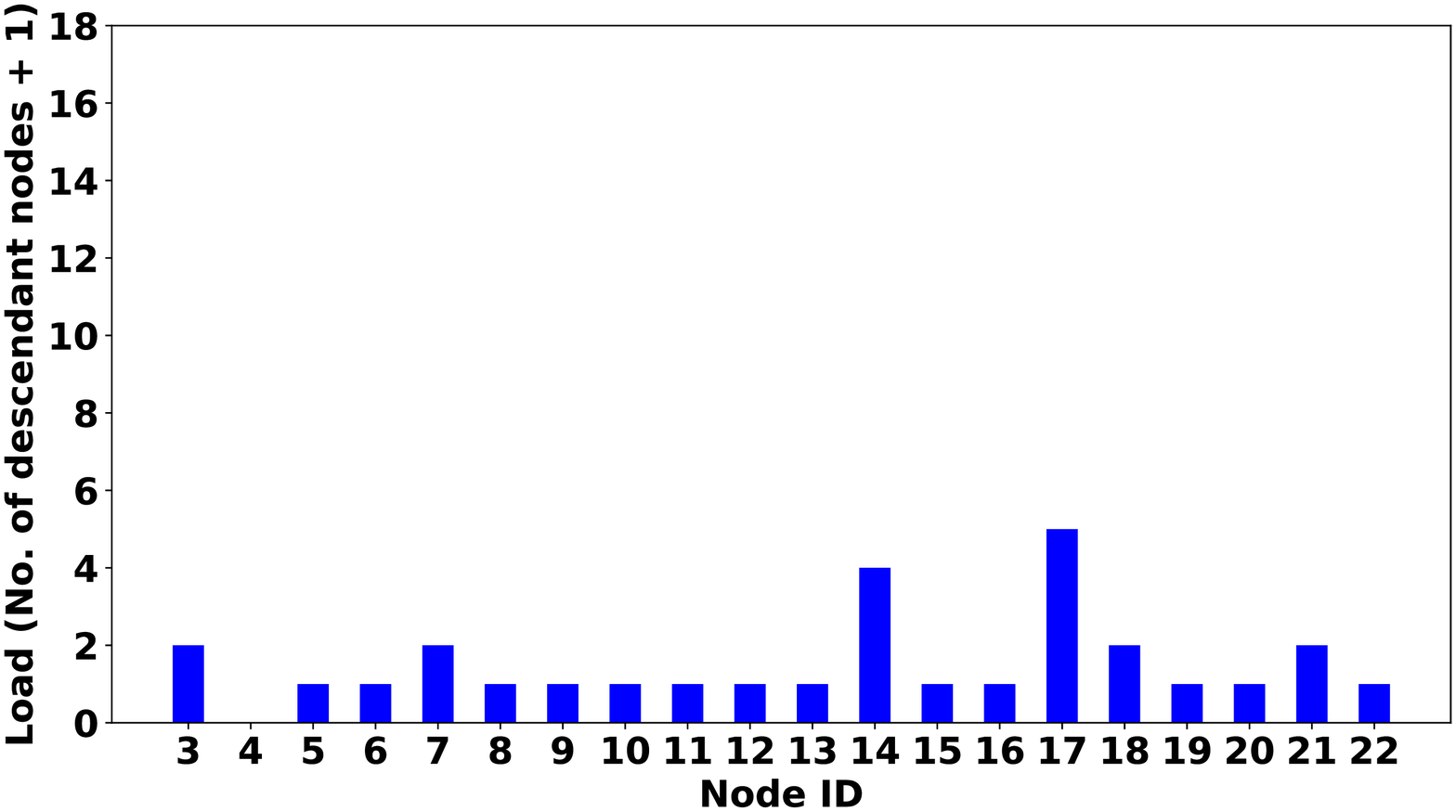}}
\hfil
\subfloat[RDCT - test case 1]{\includegraphics[scale=0.21]{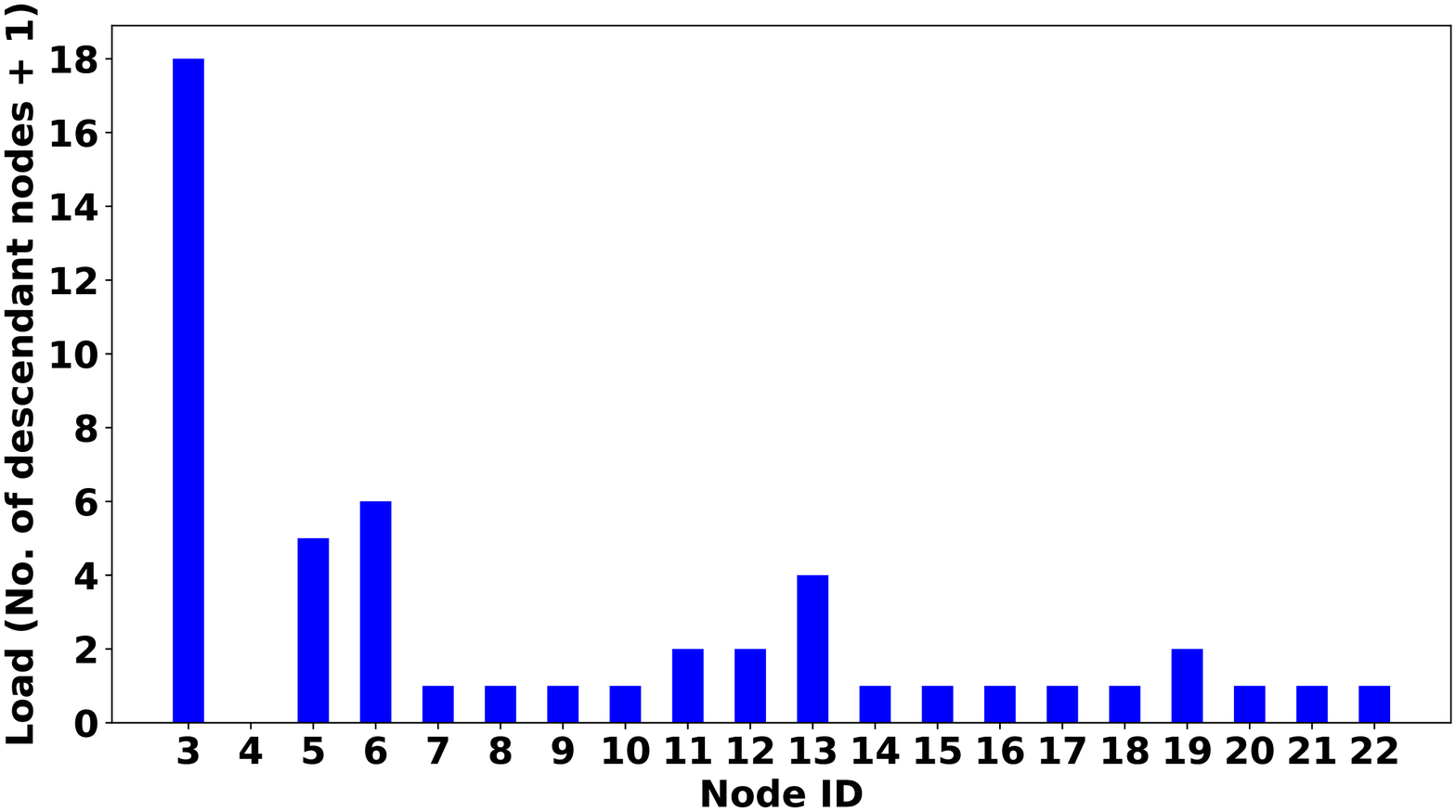}}
\hfil
\subfloat[RDCT - test case 2]{\includegraphics[scale=0.21]{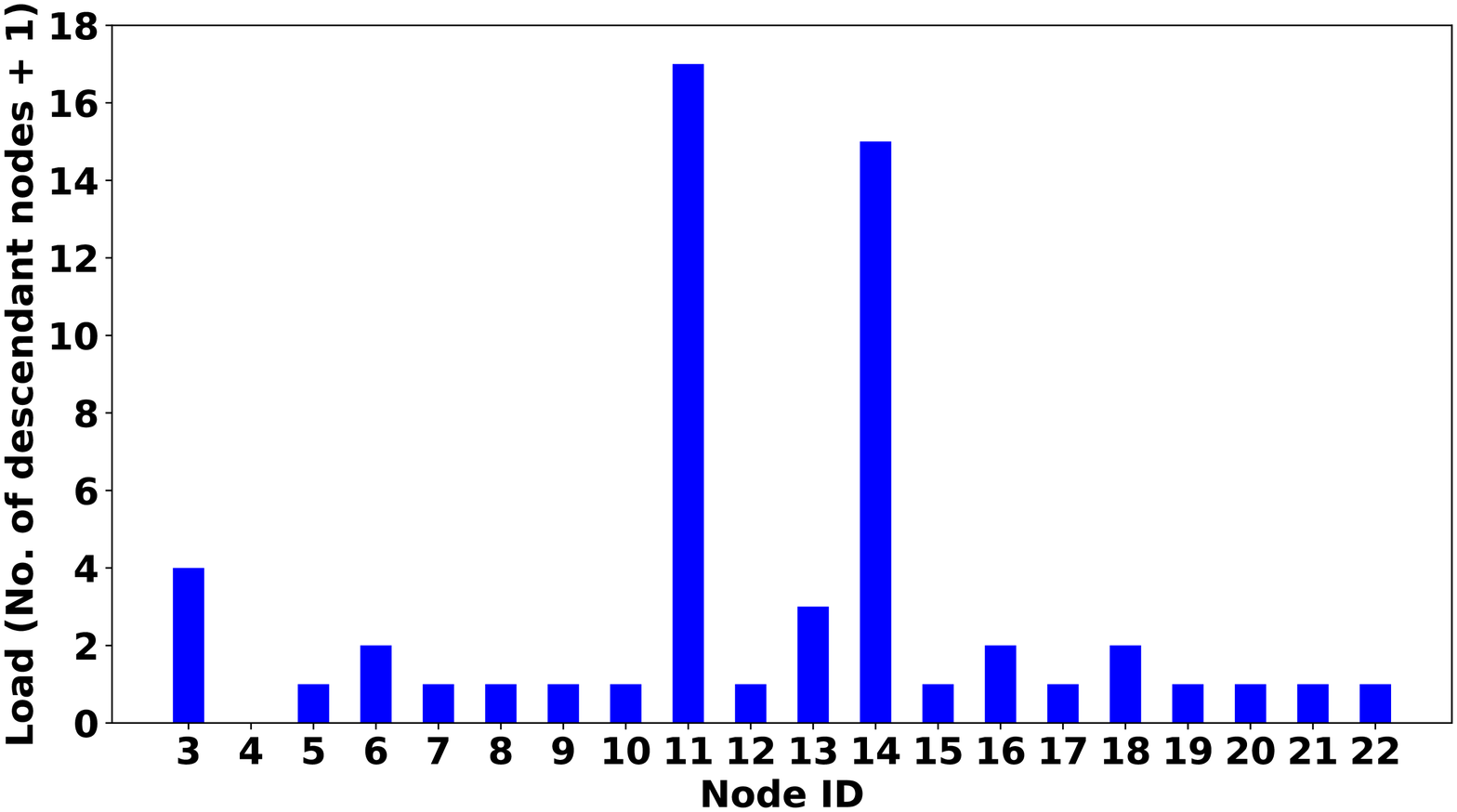}}
\hfil
\subfloat[RDCT - test case 3]{\includegraphics[scale=0.21]{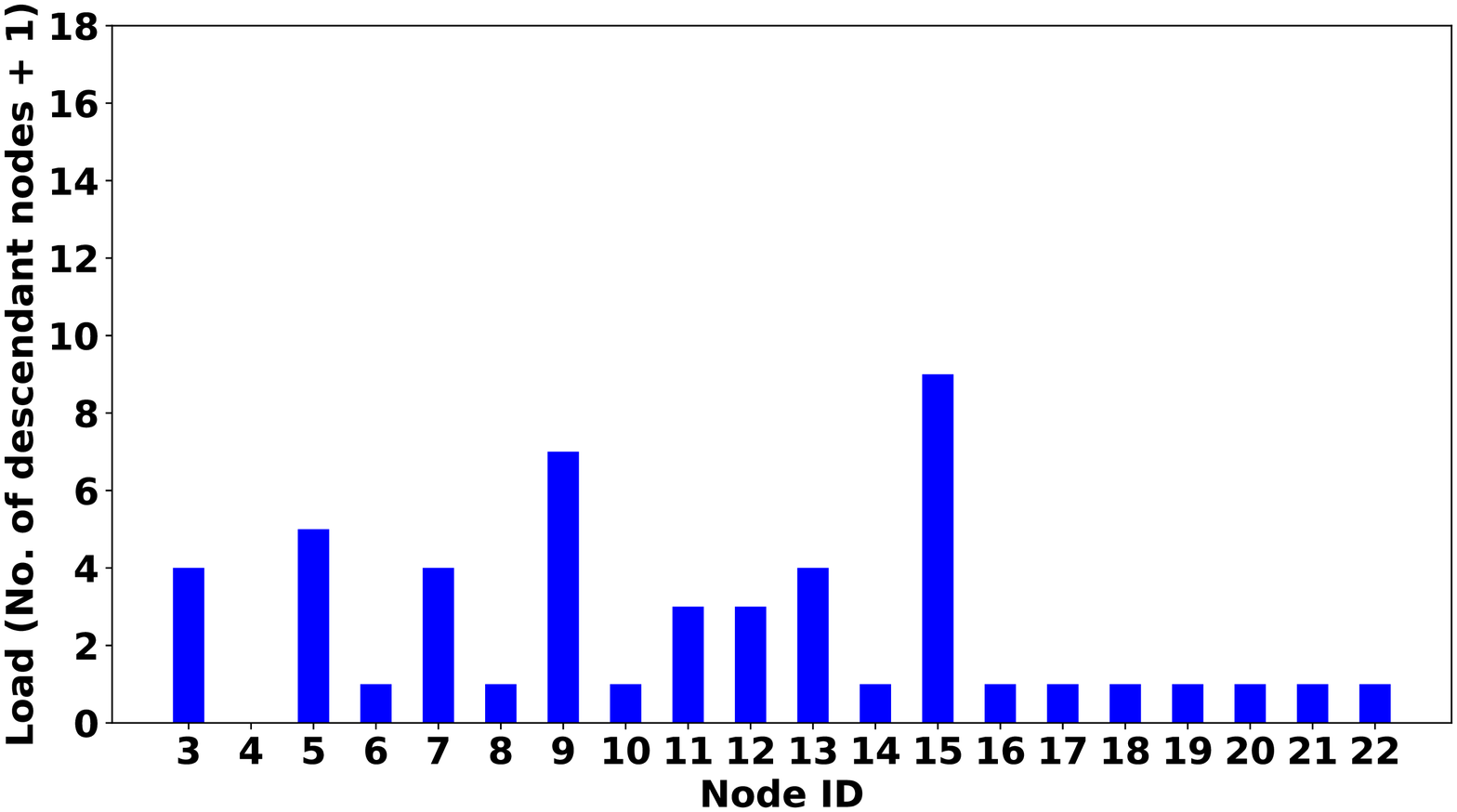}}
\caption{Load on each node during the first data collection round of each test case}
\label{load_0thDCT_eachtestcase}
\end{figure*}
 
 Fig.~\ref{load_0thDCT_eachtestcase} shows the load on each node in the first data collection tree for each test case. By load of a node in the tree, we mean the number of descendants the node has plus an additional one (representing its own generated data). Each node has a fully charged battery during this data collection tree formation stage and hence a high lifetime tree is expected to 
 have a highly balanced load. It is clear from Fig.~\ref{load_0thDCT_eachtestcase} 
 that \emph{BDCT has better load balancing when compared with SPT and RDCT}; RDCT performs the worst. This trend is consistently observed in all the test cases.

\begin{figure*}[h]
\centering
\subfloat[BDCT - $1^{st}$ data collection round]{\label{BDCT_round1}\includegraphics[scale=0.36]{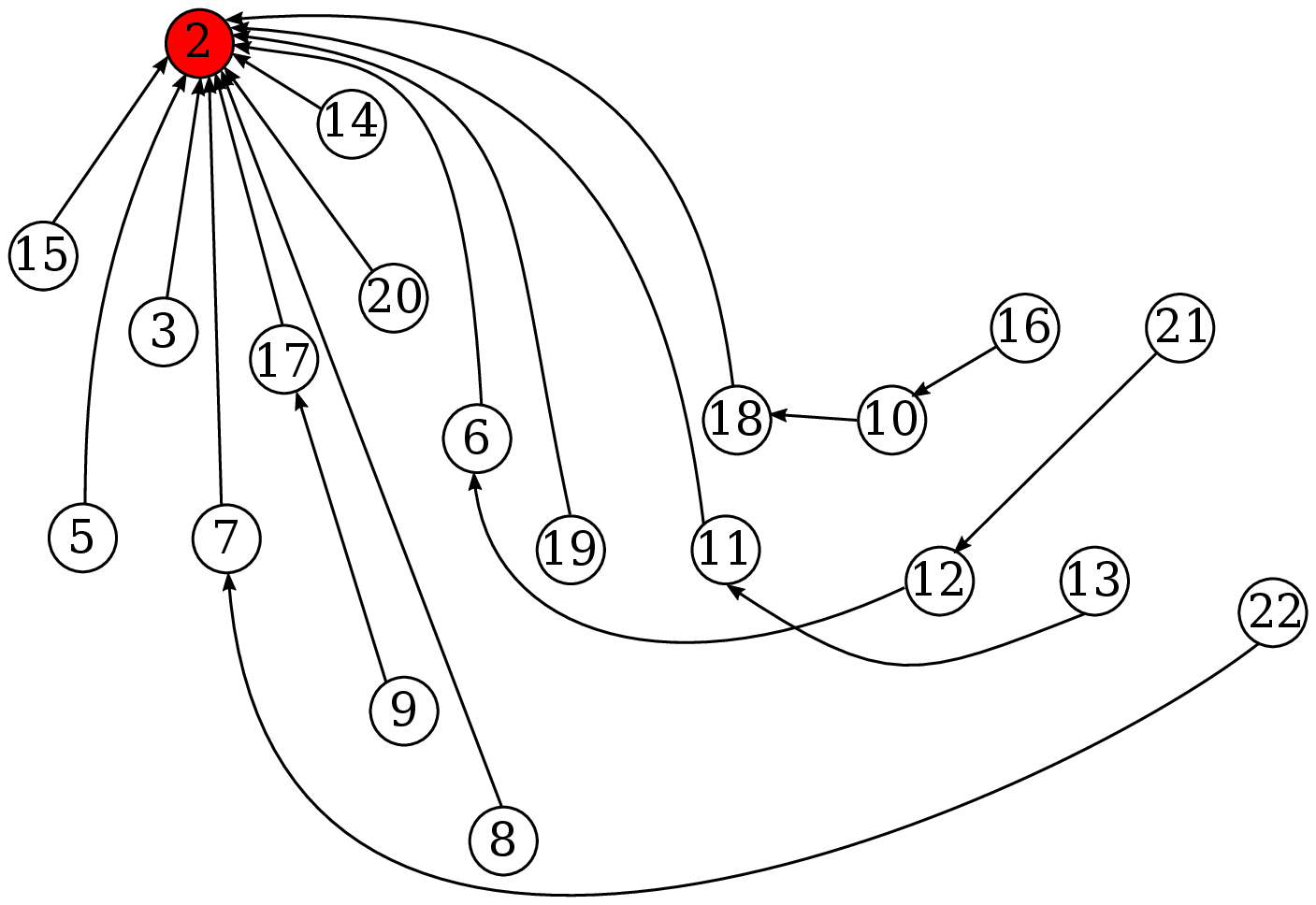}}
\hfil
\subfloat[BDCT - $4^{th}$ data collection round]{\label{BDCT_round4}\includegraphics[scale=0.36]{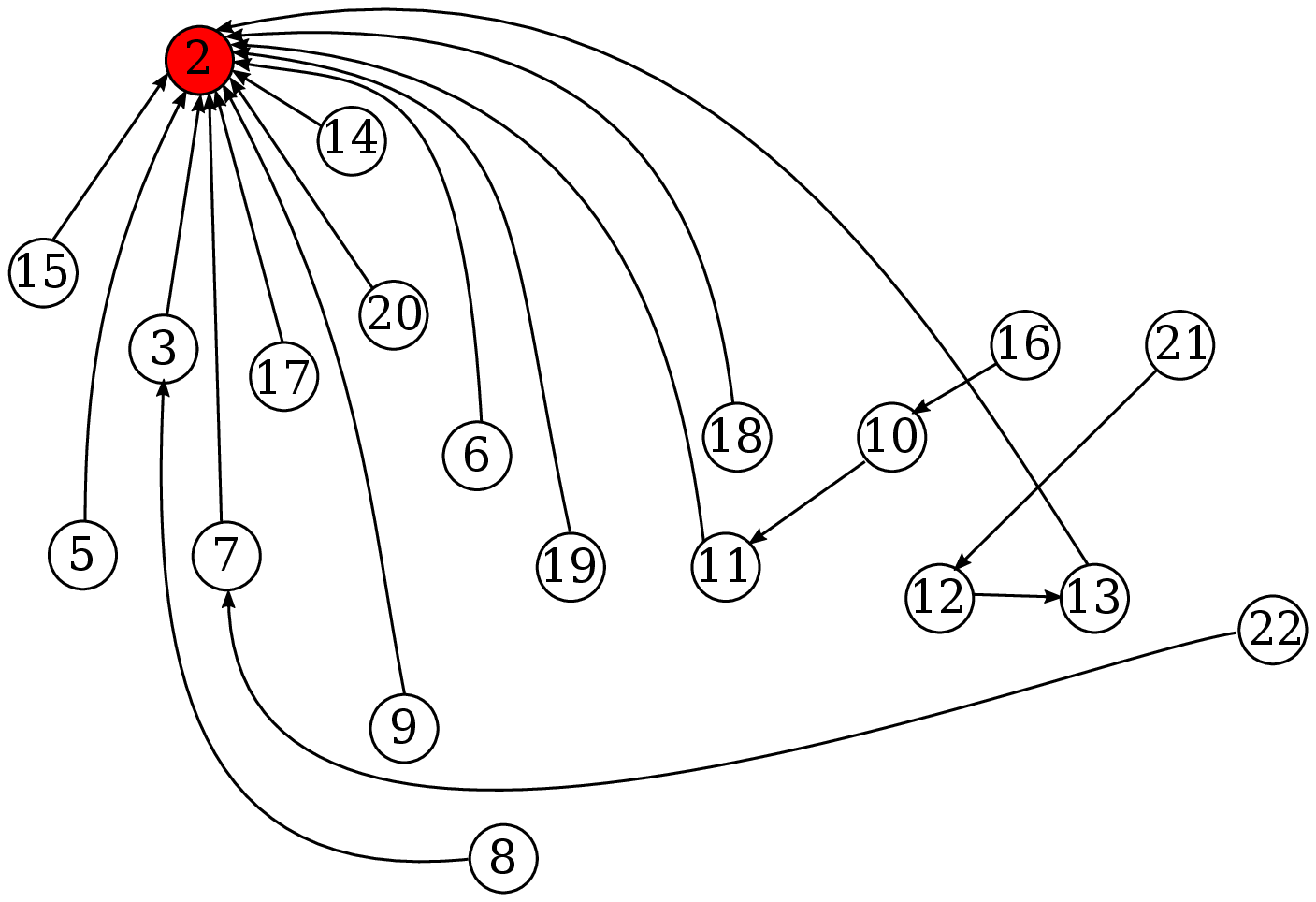}}
\hfil
\subfloat[BDCT - $7^{th}$ data collection round]{\label{BDCT_round7}\includegraphics[scale=0.36]{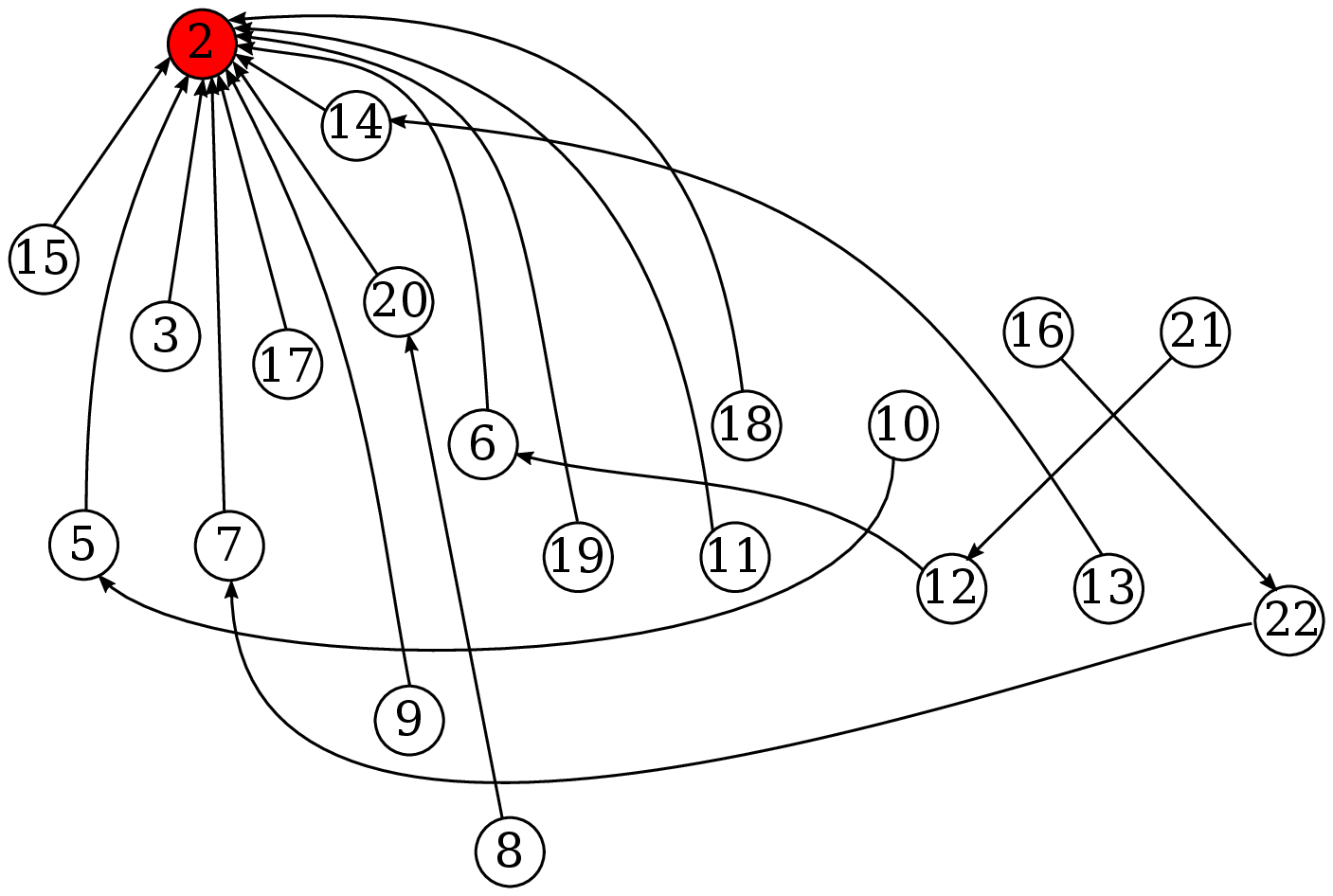}}
\hfil
\subfloat[SPT - $1^{st}$ data collection round]{\label{SPT_round1}\includegraphics[scale=0.36]{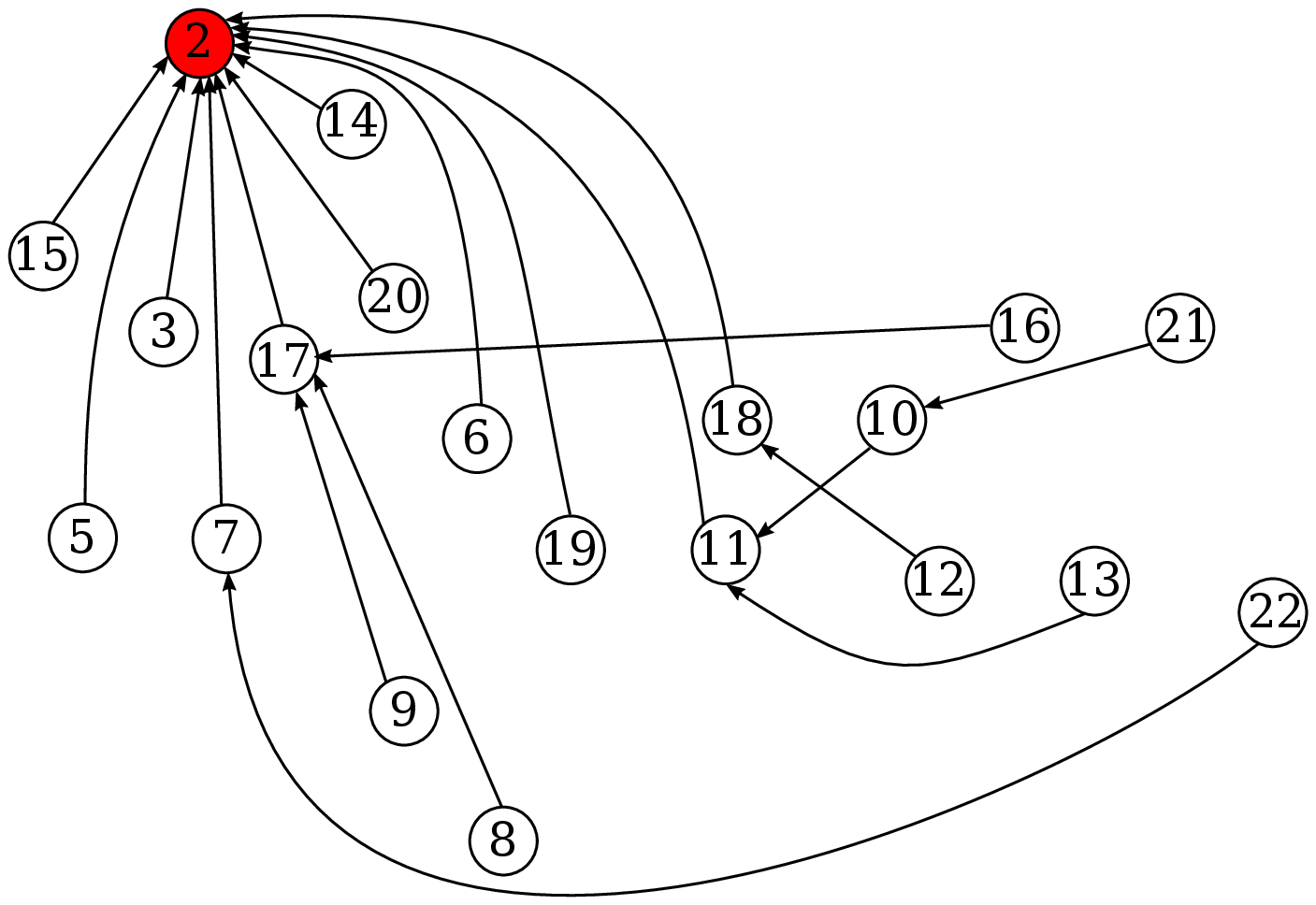}}
\hfil
\subfloat[SPT - $4^{th}$ data collection round]{\label{SPT_round4}\includegraphics[scale=0.36]{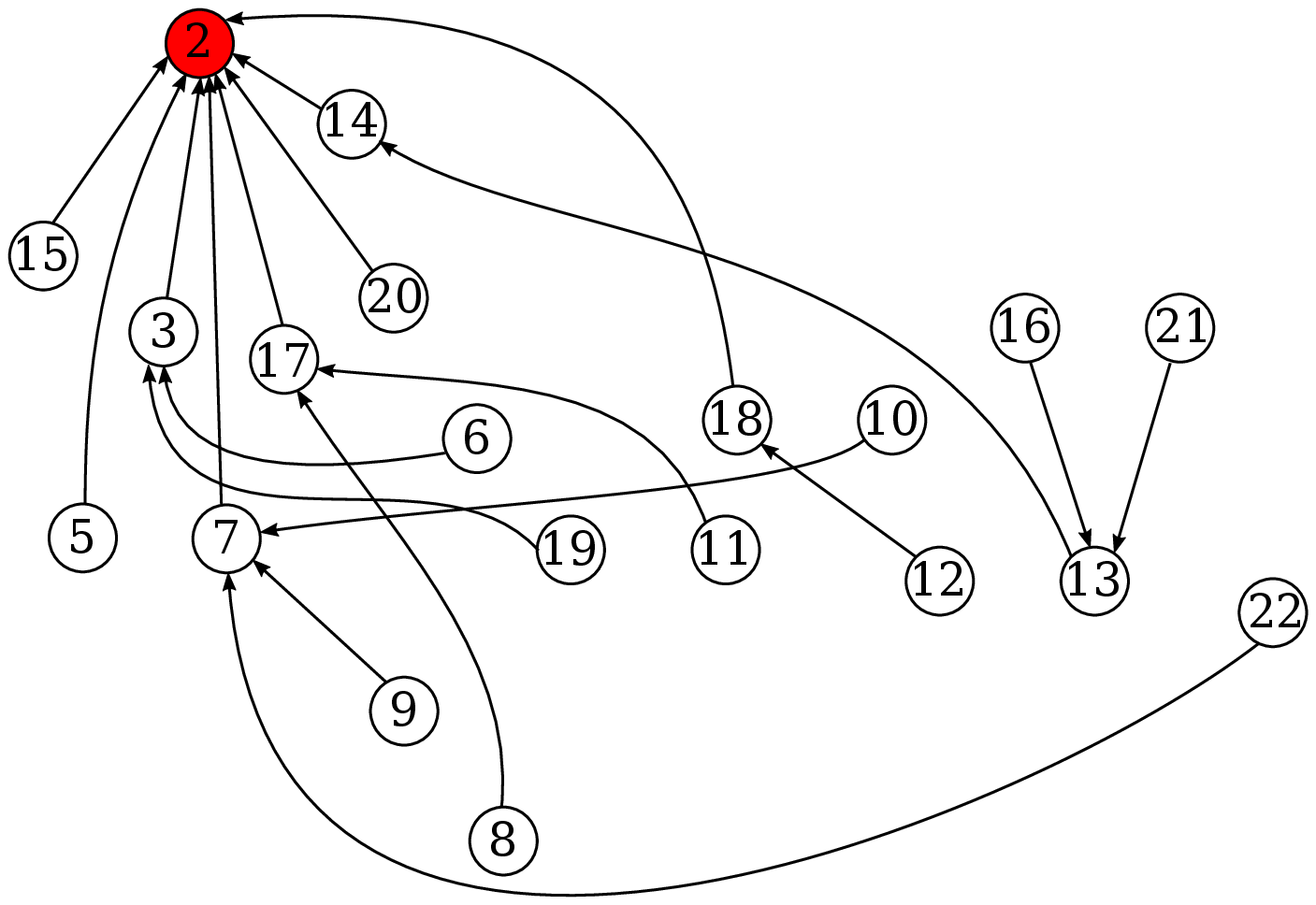}}
\hfil
\subfloat[SPT - $7^{th}$ data collection round]{\label{SPT_round7}\includegraphics[scale=0.36]{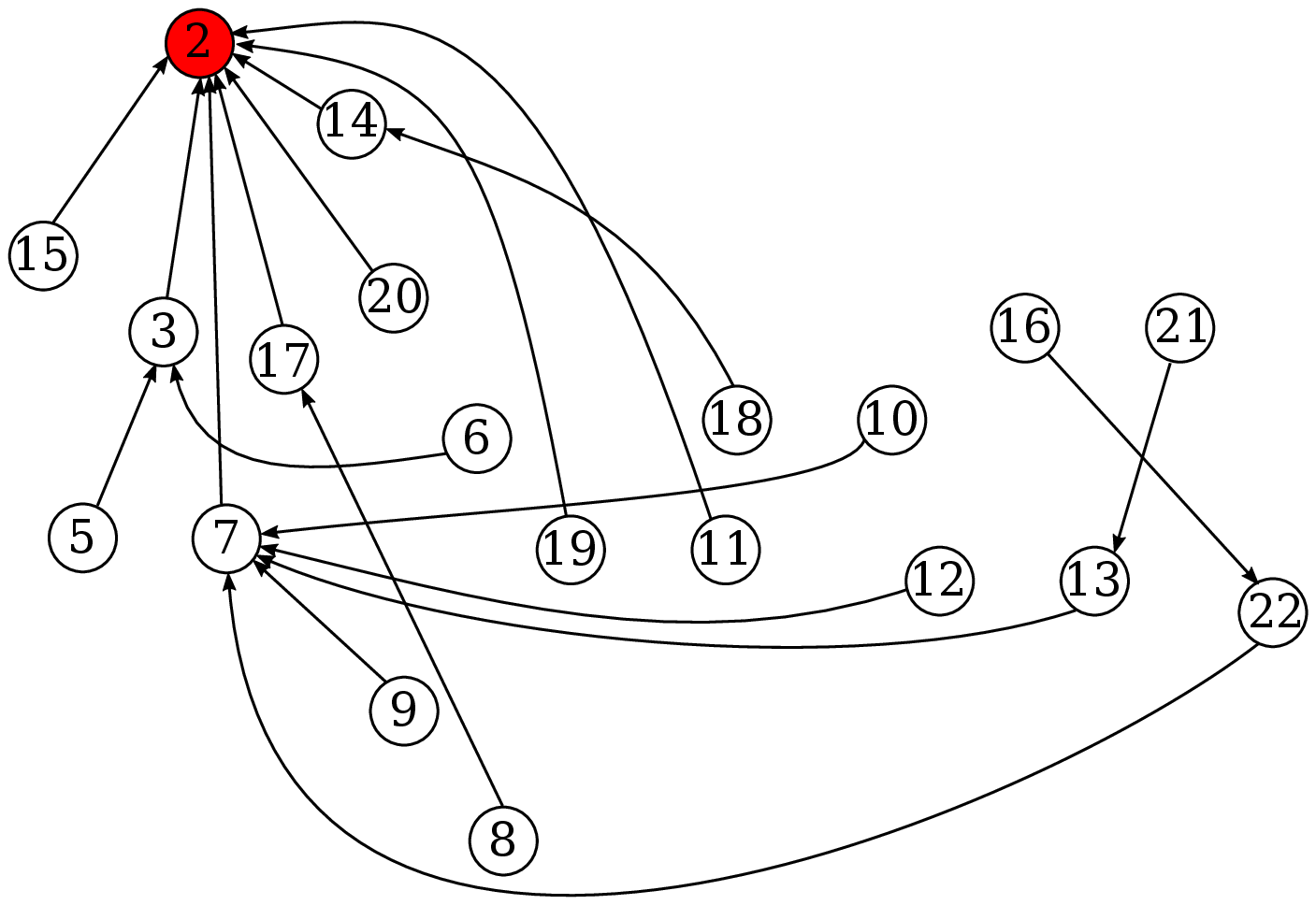}}
\hfil
\subfloat[RDCT - $1^{st}$ data collection round]{\label{RDCT_round1}\includegraphics[scale=0.36]{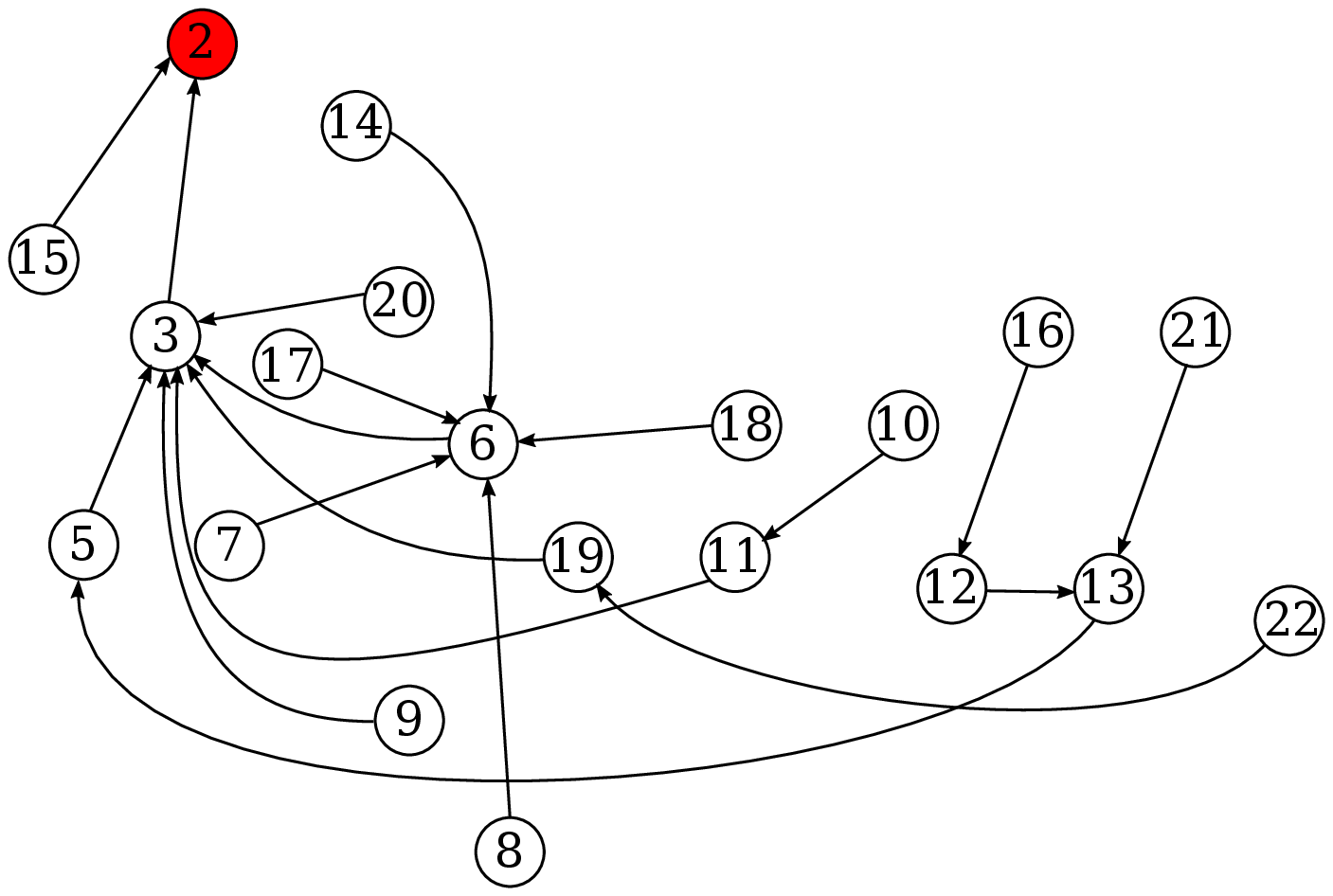}}
\hfil
\subfloat[RDCT - $4^{th}$ data collection round]{\label{RDCT_round4}\includegraphics[scale=0.36]{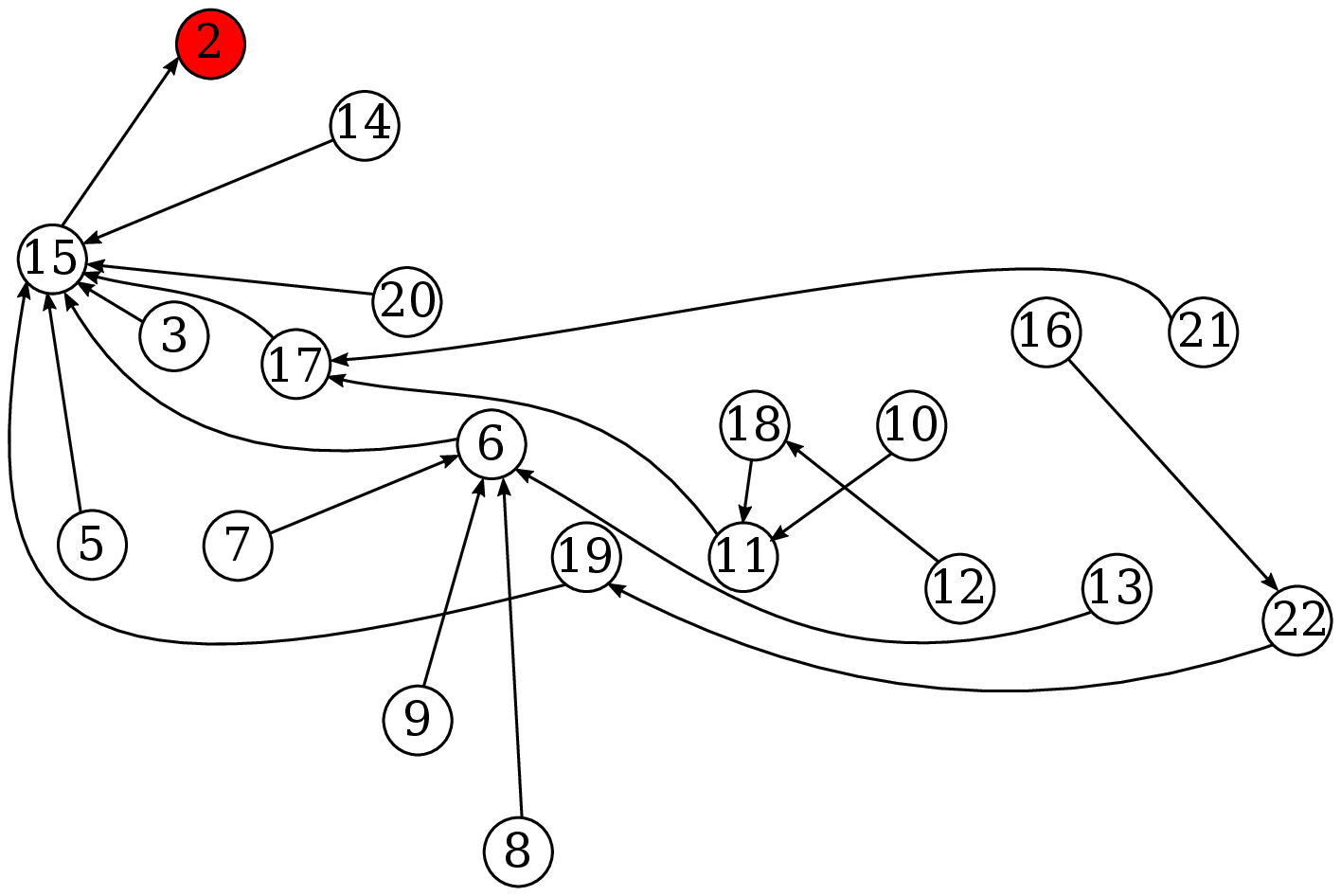}}
\hfil
\subfloat[RDCT - $7^{th}$ data collection round]{\label{RDCT_round7}\includegraphics[scale=0.36]{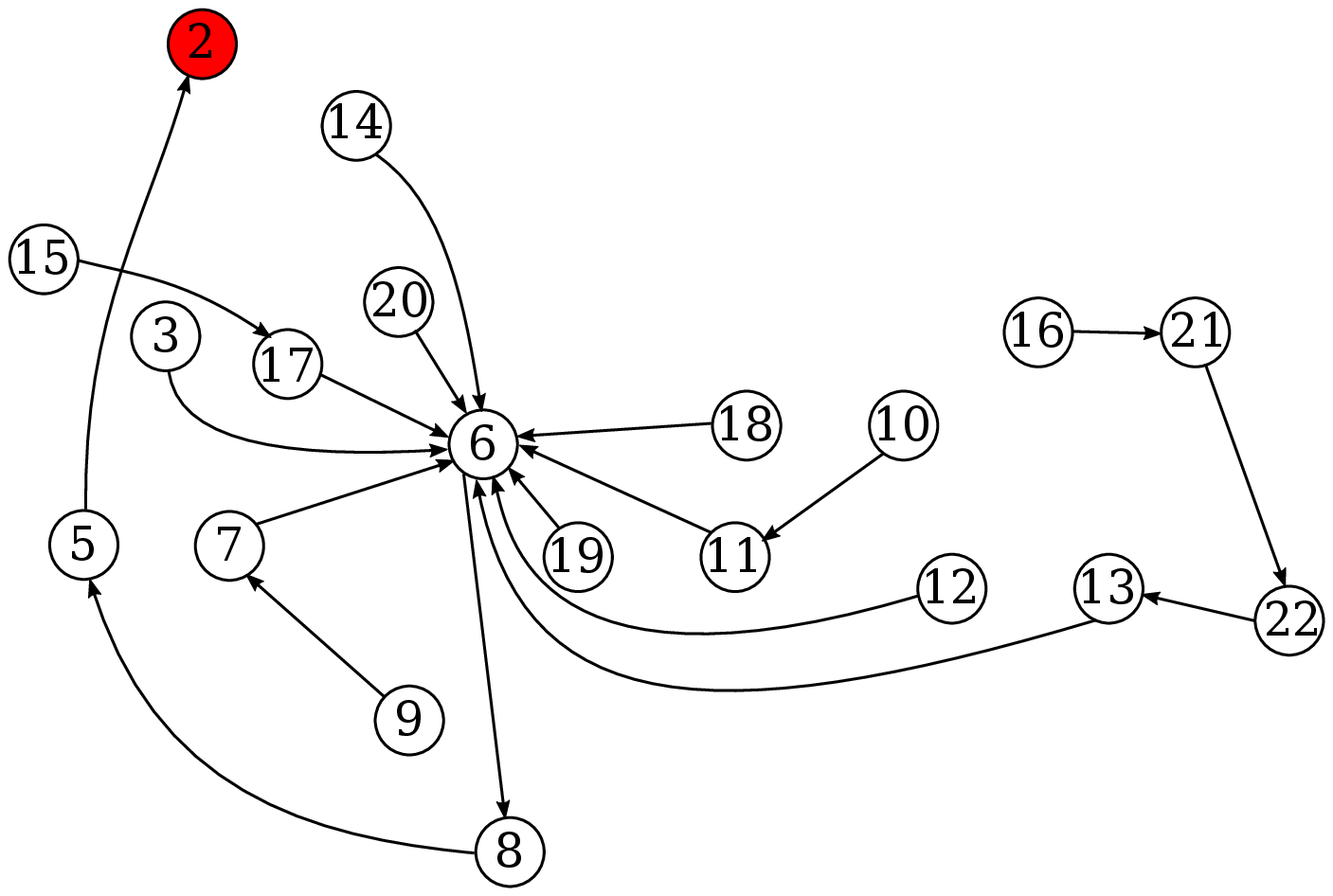}}

\caption{Data collection trees constructed during different data collection rounds in test case 1}
\label{single_test_case_different_dcts}
\end{figure*}

Fig.~\ref{single_test_case_different_dcts} shows some of the data collection trees constructed during different data collection rounds 
of test case 1 under each of the three algorithms. 
The major observations from Fig.~\ref{single_test_case_different_dcts} are as follows:

\begin{enumerate}

 \item {Many nodes directly get connected with the sink node under BDCT as well as under SPT. This is expected under BDCT since the root node (sink) is considered to have infinite energy. Also, it happens under SPT since the energy required for direct communication with the sink is less than that for multi-hop communication. This trend is not visible in case of RDCT. }

\item {The \emph{loads on different nodes are observed to be more balanced under BDCT than under SPT and RDCT}. For example, the maximum load of any node under 
BDCT is observed to be three (refer to node 18 in Fig.~\ref{BDCT_round1}, node 11 in Fig.~\ref{BDCT_round4} and node 6 in Fig.~\ref{BDCT_round7}).
Among the trees formed under SPT, node 7 in Fig.~\ref{SPT_round7} has a load of 8 and among the trees formed under RDCT,
node 15 in Fig.~\ref{RDCT_round4} and node  5 in Fig.~\ref{RDCT_round7} have the maximum load of 19 each.
}
\end{enumerate}

\subsection{Simulations based performance evaluation of the proposed algorithm in large networks}
\label{SSC:simulations}
In this section, we study the performance of the proposed algorithm in large networks through simulations using Python.
The lifetime of the network is considered as the time until the first node in the network fails
due to battery depletion.
The simulation studies are carried out to understand how the lifetime of the network varies in large networks 
when the data collection happens through different data collection trees. 
The network lifetime under the proposed algorithm (BDCT) is compared with those under the state-of-the-art Randomized
Switching for Maximizing Lifetime (RaSMaLai) algorithm \cite{imon2015energy}, the shortest path tree (SPT), minimum spanning tree (MST)~\cite{Dijkstra}, and random tree (RDCT) based data collection schemes.

The energy parameters used for the simulations are in correspondence with the actual energy consumption of various modules in the sensor nodes that are used in the experimental evaluations described in Section~\ref{SSC:testbed:based:evaluation}. For our simulations, we follow the same deployment strategies as used in the state-of-the-art work~\cite{imon2015energy}. In particular,  nodes
are randomly placed in an area of $100 m \times 100 m$, and the number of nodes ($N$) is varied from $50$ to $400$.
We have considered two test scenarios for the simulation studies. In scenario 1, the root node (sink) is placed at the center of the deployment area, whereas in scenario 2, the root node is placed at one corner of the deployment area. In both the test cases, at time $t=0$, the battery of each node is fully charged. Each node in the simulations supports different transmission power levels, and the maximum transmission range of a node is considered to be $25$ m.
``NetworkX''\cite{networkx} is a Python library that is widely used for studying graphs and networks.
To model the network, we have used the ``random-geometric-graph'' function \cite{networkx} from the ``NetworkX'' package for the graph generation.

Recall that the considered data collection approach (detailed in Section~\ref{Network_architecture}) involves two stages: formation of a data 
collection tree, followed by periodic data collection along the tree.
Once the data collection tree is constructed for a generated random connected graph $G$, the network enters into the periodic data collection phase. In this phase, every node updates its current remaining battery energy level in every data collection time slot by deducting the consumed energy from the remaining battery energy level at the end of the previous time slot. 
The energy expenditure of a sensor node in each data collection time slot includes the energy spent for sensor data generation,
data reception from its children and transmission to its parent node. The reconstruction of a data collection tree happens once every $k$ 
data collection time slots ($k = 10000$ in the simulations). The lifetime is defined to be the number of data collection time slots 
until one of the nodes in the network gets depleted of its battery energy.
The simulations ignore the energy expenditure of nodes during the 
sleep stage and the overhead which is required for the tree construction phase, since tree construction occurs very rarely (once every $10000$ data collection time slots).

\begin{figure}[]
\centering
  \subfloat[Test scenario~1: Root node at the centre]{\label{root_centre} \includegraphics[scale=0.2]{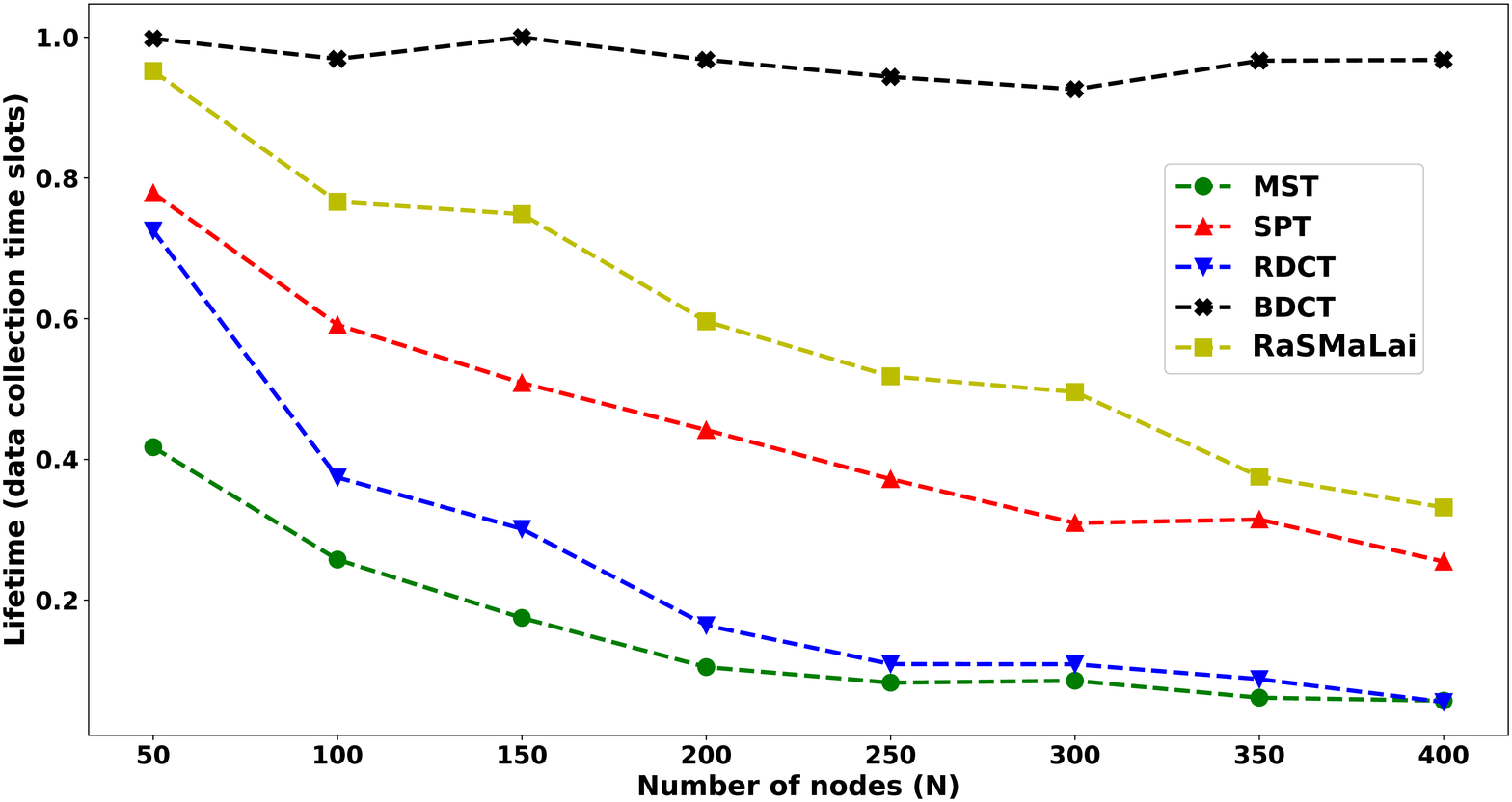}}
 \hfill
  \subfloat[Test scenario~2: Root node at a corner]{\label{root_corner} \includegraphics[scale=0.2]{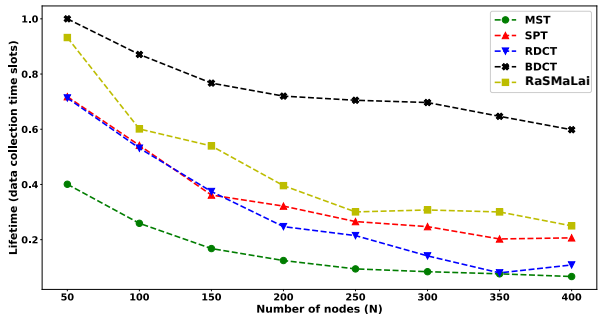}}
 \hfill
\caption{Lifetimes of the network when data collection occurs through trees constructed using BDCT, RaSMaLai, SPT, MST, and RDCT.}
\label{algo_simulations}
\end{figure} 

Fig. \ref{algo_simulations} shows the lifetimes of the network for both the test scenarios, for the cases when data collection happens through trees constructed using different algorithms, viz., BDCT, RaSMaLai~\cite{imon2015energy}, SPT, MST, and RDCT. Each lifetime evaluation was carried out for ten randomly generated graphs, and each point in Fig.~ \ref{algo_simulations} represents the average lifetime over the ten simulation trials. Also, the lifetime values are scaled to between $0$ and $1$ (normalized). 
Fig. \ref{algo_simulations} shows that \emph{the proposed data collection tree construction algorithm (BDCT) provides a significantly higher lifetime than all the other tree construction approaches in both the test scenarios}.  
Also, as the number of nodes ($N$) in the network increases, the network becomes denser, and the proposed algorithm (BDCT) outperforms the other data collection schemes (RaSMaLai, SPT, MST, and RDCT) by a larger margin. In particular, when $N = 400$, the network lifetime under the proposed BDCT algorithm is more than double that under the RaSMaLai, SPT, MST, and RDCT algorithms. 

\section{Conclusions}
\label{conclusion}
In this paper, we addressed the problem of building a maximum lifetime data collection tree for periodic data collection in sensor network applications.
We formulated the maximum lifetime data collection tree problem by considering the energy expenditure on a data \emph{packet} basis, in contrast to prior works, which consider it on a data \emph{unit} basis.
Variable transmission power levels of the radio and taking the sensor energy consumption into account are other factors that make our problem formulation different from those in prior work. We proved  NP-completeness of the formulated problem by  reducing the set cover problem to it and proposed a novel algorithm for finding a data collection tree with a high lifetime. The performance of the proposed algorithm was evaluated via its actual implementation on a WSN testbed consisting of 20 sensor nodes and 
compared with those of the SPT and RDCT algorithms. It was observed that the proposed algorithm discharges the nodes' battery voltages in a more balanced manner and thus provides a higher network lifetime than the SPT and RDCT algorithms. Also, we compared the performance of the BDCT algorithm in large networks with those of the RaSMaLai, MST, SPT and RDCT algorithms  through simulation studies. Our simulations show that the proposed BDCT algorithm provides a significantly higher lifetime than all the other tree construction approaches considered in both the test scenarios.

\section*{Acknowledgment}
Authors would like to acknowledge the financial support received for NNetRA (Nanoelectronics Network for Research and Application)
from Ministry of Electronics and Information Technology (MeitY), Department of Science \& Technology (DST) Government of India, through Centre of 
Excellence in Nanoelectronics, IIT-Bombay (Grant numbers: No. 5 (1)/2017-NANO and DST/NM/NNetRA/2018(G)-IIT-B). 
We also would like to thank Prof. Dinesh K. Sharma and Prof. Shabbir Merchant for the fruitful discussions and suggestions.

\section*{References}
\Urlmuskip=0mu plus 1mu
 \bibliographystyle{elsarticle-num} 
 \bibliography{Maximum_Lifetime}

\begin{thebibliography}{10}
\expandafter\ifx\csname url\endcsname\relax
  \def\url#1{\texttt{#1}}\fi
\expandafter\ifx\csname urlprefix\endcsname\relax\def\urlprefix{URL }\fi
\expandafter\ifx\csname href\endcsname\relax
  \def\href#1#2{#2} \def\path#1{#1}\fi

\bibitem{lombardo2017wireless}
L.~Lombardo, et~al., Wireless sensor network for distributed environmental
  monitoring, IEEE Transactions on Instrumentation and Measurement (2017).

\bibitem{yang2010design}
J.~Yang, X.~Li, Design and implementation of low-power wireless sensor networks
  for environmental monitoring, in: IEEE International Conference on WCNIS,,
  IEEE, 2010, pp. 593--597.

\bibitem{barrenetxea2008sensorscope}
G.~Barrenetxea, et~al., Sensorscope: Out-of-the-box environmental monitoring,
  in: Proceedings of the 7th international conference on IPSN, IEEE Computer
  Society, 2008, pp. 332--343.

\bibitem{pakzad2008design}
S.~N. Pakzad, et~al., Design and implementation of scalable wireless sensor
  network for structural monitoring, Journal of infrastructure systems 14~(1)
  (2008) 89--101.

\bibitem{sharma2019maximization}
H.~Sharma, A.~Haque, Z.~A. Jaffery, Maximization of wireless sensor network
  lifetime using solar energy harvesting for smart agriculture monitoring, Ad
  Hoc Networks 94 (2019) 101966.

\bibitem{langendoen2006murphy}
K.~Langendoen, A.~Baggio, O.~Visser, Murphy loves potatoes: Experiences from a
  pilot sensor network deployment in precision agriculture, in: Proceedings
  20th IEEE International Parallel Distributed Processing Symposium, IEEE,
  2006, pp. 8--pp.

\bibitem{ojha2015wireless}
T.~Ojha, S.~Misra, N.~S. Raghuwanshi, Wireless sensor networks for agriculture:
  The state-of-the-art in practice and future challenges, Computers and
  Electronics in Agriculture 118 (2015) 66--84.

\bibitem{jobishjohnEAI}
J.~John, et~al., Design and implementation of a wireless sensor network for
  agricultural applications, EAI Endorsed Transactions on Internet of Things
  4~(16) (2019).
\newblock \href {https://doi.org/10.4108/eai.13-7-2018.158420}
  {\path{doi:10.4108/eai.13-7-2018.158420}}.

\bibitem{fastcollection}
O.~D. Incel, A.~Ghosh, B.~Krishnamachari, K.~Chintalapudi, Fast data collection
  in tree-based wireless sensor networks, IEEE Transactions on Mobile computing
  11~(1) (2012) 86--99.

\bibitem{mak2009long}
N.~H. Mak, W.~K. Seah, How long is the lifetime of a wireless sensor network?,
  in: 2009 International Conference on Advanced Information Networking and
  Applications, IEEE, 2009, pp. 763--770.

\bibitem{anastasi2009extending}
G.~Anastasi, M.~Conti, M.~Di~Francesco, Extending the lifetime of wireless
  sensor networks through adaptive sleep, IEEE Transactions on Industrial
  Informatics 5~(3) (2009) 351--365.

\bibitem{xu2010delay}
X.~Xu, et~al., A delay-efficient algorithm for data aggregation in multihop
  wireless sensor networks, IEEE transactions on parallel and distributed
  systems 22~(1) (2010) 163--175.

\bibitem{min2012data}
J.~Min, J.~Kim, Y.~Kwon, Data compression technique for wireless sensor
  networks, in: International Conference on Hybrid Information Technology,
  Springer, 2012, pp. 9--16.

\bibitem{toumpis2009load}
S.~Toumpis, S.~Gitzenis, Load balancing in wireless sensor networks using
  kirchhoff's voltage law, in: IEEE INFOCOM 2009, IEEE, 2009, pp. 1656--1664.

\bibitem{junbin2010efficient}
L.~Junbin, et~al., An efficient algorithm for constructing maximum lifetime
  tree for data gathering without aggregation in wireless sensor networks, in:
  Proceedings of IEEE INFOCOM 2010, pp. 1--5.

\bibitem{imon2015energy}
S.~K.~A. Imon, A.~Khan, M.~Di~Francesco, S.~K. Das, Energy-efficient randomized
  switching for maximizing lifetime in tree-based wireless sensor networks,
  IEEE/ACM Transactions on Networking (TON) 23~(5) (2015) 1401--1415.

\bibitem{yarinezhad2020increasing}
R.~Yarinezhad, S.~N. Hashemi, Increasing the lifetime of sensor networks by a
  data dissemination model based on a new approximation algorithm, Ad Hoc
  Networks 100 (2020) 102084.

\bibitem{younis2006node}
O.~Younis, M.~Krunz, S.~Ramasubramanian, Node clustering in wireless sensor
  networks: Recent developments and deployment challenges, IEEE network 20~(3)
  (2006) 20--25.

\bibitem{pantazis2012energy}
N.~A. Pantazis, S.~A. Nikolidakis, D.~D. Vergados, Energy-efficient routing
  protocols in wireless sensor networks: A survey, IEEE Communications surveys
  \& tutorials 15~(2) (2012) 551--591.

\bibitem{lin2017approximation}
H.-C. Lin, W.-Y. Chen, An approximation algorithm for the maximum-lifetime data
  aggregation tree problem in wireless sensor networks, IEEE Transactions on
  Wireless Communications 16~(6) (2017) 3787--3798.

\bibitem{buragohain2005power}
C.~Buragohain, D.~Agrawal, S.~Suri, Power aware routing for sensor databases,
  in: Proceedings of IEEE INFOCOM 2005, Vol.~3, IEEE, 2005, pp. 1747--1757.

\bibitem{wu2010constructing}
Y.~Wu, et~al., Constructing maximum-lifetime data gathering forests in sensor
  networks, IEEE/ACM Transactions on Networking (TON) 18~(5) (2010) 1571--1584.

\bibitem{huang2012load}
C.-K. Huang, G.~Y. Chang, J.-P. Sheu, Load-balanced trees for data collection
  in wireless sensor networks, in: 2012 41st International Conference on
  Parallel Processing Workshops, IEEE, 2012, pp. 474--479.

\bibitem{lee2012improved}
S.~H. Lee, T.~Radzik, Improved approximation bounds for maximum lifetime
  problems in wireless ad-hoc network, in: International Conference on Ad-Hoc
  Networks and Wireless, Springer, 2012, pp. 14--27.

\bibitem{berman2004power}
P.~Berman, G.~Calinescu, C.~Shah, A.~Zelikovsky, Power efficient monitoring
  management in sensor networks, in: 2004 IEEE Wireless Communications and
  Networking Conference (IEEE Cat. No. 04TH8733), Vol.~4, IEEE, 2004, pp.
  2329--2334.

\bibitem{xiong2009maximize}
S.~Xiong, J.~Li, L.~Yu, Maximize the lifetime of a data-gathering wireless
  sensor network, in: 2009 6th Annual IEEE Communications Society Conference on
  Sensor, Mesh and Ad Hoc Communications and Networks, IEEE, 2009, pp. 1--9.

\bibitem{wen2007tree}
Y.-F. Wen, F.~Y.-S. Lin, W.-C. Kuo, A tree-based energy-efficient algorithm for
  data-centricwireless sensor networks, in: 21st International Conference on
  Advanced Information Networking and Applications (AINA'07), IEEE, 2007, pp.
  202--209.

\bibitem{ordonez2004optimal}
F.~Ord{\'o}nez, B.~Krishnamachari, Optimal information extraction in
  energy-limited wireless sensor networks, IEEE Journal on Selected Areas in
  Communications 22~(6) (2004) 1121--1129.

\bibitem{luo2011maximizing}
D.~Luo, X.~Zhu, X.~Wu, G.~Chen, Maximizing lifetime for the shortest path
  aggregation tree in wireless sensor networks, in: 2011 Proceedings IEEE
  INFOCOM, IEEE, 2011, pp. 1566--1574.

\bibitem{tan2003power}
H.~{\"O}. Tan, I.~K{\"o}rpeoǧlu, Power efficient data gathering and
  aggregation in wireless sensor networks, ACM Sigmod Record 32~(4) (2003)
  66--71.

\bibitem{wan2016construction}
S.~Wan, Y.~Zhang, J.~Chen, On the construction of data aggregation tree with
  maximizing lifetime in large-scale wireless sensor networks, IEEE Sensors
  Journal 16~(20) (2016) 7433--7440.

\bibitem{zhou2016maximizing}
F.~Zhou, Z.~Chen, S.~Guo, J.~Li, Maximizing lifetime of data-gathering trees
  with different aggregation modes in wsns, IEEE Sensors Journal 16~(22) (2016)
  8167--8177.

\bibitem{fei2016survey}
Z.~Fei, et~al., A survey of multi-objective optimization in wireless sensor
  networks: Metrics, algorithms, and open problems, IEEE Communications Surveys
  \& Tutorials 19~(1) (2016) 550--586.

\bibitem{wu2008construction}
Y.~Wu, S.~Fahmy, N.~B. Shroff, On the construction of a maximum-lifetime data
  gathering tree in sensor networks: Np-completeness and approximation
  algorithm, in: INFOCOM 2008. The 27th conference on computer communications.
  IEEE, IEEE, 2008, pp. 356--360.

\bibitem{imon2013rasmalai}
S.~K.~A. Imon, A.~Khan, M.~Di~Francesco, S.~K. Das, Rasmalai: A randomized
  switching algorithm for maximizing lifetime in tree-based wireless sensor
  networks, in: 2013 Proceedings IEEE INFOCOM, IEEE, 2013, pp. 2913--2921.

\bibitem{telosb_mote}
MEMSIC,
  \href{http://www.memsic.com/userfiles/files/Datasheets/WSN/telosb_datasheet.pdf}{{TELOSB}
  mote platform}, (accessed: 24-Apr-2019).
\newline\urlprefix\url{http://www.memsic.com/userfiles/files/Datasheets/WSN/telosb_datasheet.pdf}

\bibitem{tinyos}
P.~Levis, et~al., Tinyos: An operating system for sensor networks, in: Ambient
  intelligence, Springer, 2005, pp. 115--148.

\bibitem{crossbow2010telosb}
T.~Crossbow, Telosb data sheet (2010).

\bibitem{cc2420datasheet}
C.~Datasheet, 2.4 ghz ieee 802.15. 4/zigbee-ready rf transceiver, Chipcon
  products from Texas Instruments (2006).

\bibitem{john2015design}
J.~John, et~al., Design and implementation of a soil moisture wireless sensor
  network, in: Communications (NCC), 2015 Twenty First National Conference on,
  IEEE, 2015, pp. 1--6.

\bibitem{kleinberg2006algorithm}
J.~Kleinberg, E.~Tardos, Algorithm design, Pearson Education India, 2006.

\bibitem{Dijkstra}
D.~P. Bertsekas, R.~G. Gallager, P.~Humblet, Data networks, Vol.~2,
  Prentice-hall Englewood Cliffs, NJ, 1987.

\bibitem{karl2007protocols}
H.~Karl, A.~Willig, Protocols and architectures for wireless sensor networks,
  John Wiley \& Sons, 2007.

\bibitem{jain1984quantitative}
R.~K. Jain, D.-M.~W. Chiu, W.~R. Hawe, A quantitative measure of fairness and
  discrimination, Eastern Research Laboratory, Digital Equipment Corporation,
  Hudson, MA (1984).

\bibitem{chiu1989analysis}
D.-M. Chiu, R.~Jain, Analysis of the increase and decrease algorithms for
  congestion avoidance in computer networks, Computer Networks and ISDN systems
  17~(1) (1989) 1--14.

\bibitem{networkx}
NetworkX,
  \href{https://networkx.github.io/documentation/networkx-1.10/reference/generated/networkx.generators.geometric.random_geometric_graph.html}{Python},
  accessed 12-Dec-2019.
\newline\urlprefix\url{https://networkx.github.io/documentation/networkx-1.10/reference/generated/networkx.generators.geometric.random_geometric_graph.html}

\end{thebibliography}

\end{document}